\def\eqref#1{Equation~\ref{#1}}
\def\1{\bm{1}}
\DeclareMathAlphabet{\mathsfit}{\encodingdefault}{\sfdefault}{m}{sl}
\SetMathAlphabet{\mathsfit}{bold}{\encodingdefault}{\sfdefault}{bx}{n}
\newcommand{\E}{\mathbb{E}}
\newcommand{\bs}{\mbox{\boldmath $s$}}
\newcommand{\bx}{\mbox{\boldmath $x$}}
\newcommand{\by}{\mbox{\boldmath $y$}}
\newcommand{\bbP}{{\mathbb{P}}}
\newcommand{\bbZ}{{\mathbb{Z}}}
\newcommand{\calA}{{\mathcal{A}}}
\newcommand{\calK}{{\mathcal{K}}}
\newcommand{\calL}{{\mathcal{L}}}
\newcommand{\calM}{{\mathcal{M}}}
\newcommand{\calO}{{\mathcal{O}}}
\newcommand{\calV}{{\mathcal{V}}}
\newcommand{\calW}{{\mathcal{W}}}
\newcommand{\bxi}{\mbox{\boldmath $\xi$}}
\newcommand{\bPhi}{\mbox{\boldmath $\Phi$}}
\DeclareMathOperator*{\Prob}{\mathbb{P}}
\newcommand{\sk}{\mathsf{sk}}
\newcommand{\indot}{\mathsf{IND}\text{-}\mathsf{OT}}
\newcommand{\indcpa}{\mathsf{IND}\text{-}\mathsf{CPA}}
\titlespacing*{\paragraph}{0pt}{0.5ex}{1.5ex}
\setlist[enumerate,1]{left=0em,itemsep=0.0em,topsep=0.0em,parsep=0.4em}
\setlist[itemize,1]{left=0em,itemsep=0.0em,topsep=0.0em,parsep=0.4em}
\newcommand{\algstrut}[1][\algruledefaultfactor]{\vrule width 0pt depth .25\baselineskip
height #1\baselineskip\relax}
\newcommand*{\algrule}[1][\algorithmicindent]{\hspace*{.5em}\vrule\algstrut
\hspace*{\dimexpr#1-.5em}}
\def\ALG@printindent{%
\ifnum \theALG@nested>0%
\ifx\ALG@text\ALG@x@notext%

\else \unskip
\ALG@printindent@tempcnta=1 \loop \algrule[\csname ALG@ind@\the\ALG@printindent@tempcnta\endcsname]%
\advance \ALG@printindent@tempcnta 1 \ifnum \ALG@printindent@tempcnta<\numexpr\theALG@nested+1\relax%
\repeat \fi \fi }
\patchcmd{\ALG@doentity}{\noindent\hskip\ALG@tlm}{\ALG@printindent}{}{\errmessage{failed to patch}}
\theoremstyle{plain}
\newtheorem{theorem}{Theorem}
\newtheorem{proposition}[theorem]{Proposition}
\theoremstyle{definition}
\newtheorem{definition}{Definition}
\newtheorem{assumption}[theorem]{Assumption}
\theoremstyle{remark}
\title{Watermarks for Language Models via \\Probabilistic Automata}
\author{
Yangkun Wang \\
University of California, San Diego \\
\texttt{yaw048@ucsd.edu} \\
\And
Jingbo Shang \\
University of California, San Diego \\
\texttt{jshang@ucsd.edu}
}
\begin{document}

\maketitle

\begin{abstract}

A recent watermarking scheme for language models achieves distortion-free embedding and robustness to edit-distance attacks. However, it suffers from limited generation diversity and high detection overhead.
In parallel, recent research has focused on undetectability, a property ensuring that watermarks remain difficult for adversaries to detect and spoof.
In this work, we introduce a new class of watermarking schemes constructed through \emph{probabilistic automata}.
We present two instantiations: (i) a practical scheme with exponential generation diversity and computational efficiency, and (ii) a theoretical construction with formal undetectability guarantees under cryptographic assumptions. Extensive experiments on LLaMA-3B and Mistral-7B validate the superior performance of our scheme in terms of robustness and efficiency.

\end{abstract}

\section{Introduction}
The rapid development of large-scale language models (LMs) has markedly improved AI's ability to generate textual content \citep{brown2020language}. Despite these advancements, apprehensions have arisen over authenticity, ownership, and potential misuse of such technologies  \citep{zellers2019defending, solaiman2019release}. Traditional AI detection methods, such as classifier-based detection, often fall short in terms of robustness. In contrast, text watermarking offers a potential solution to these problems. It works by embedding a private key within the text that can be detected by the key holder, thereby identifying and minimizing the abuse of AI-generated content. 

A widely adopted watermarking method conditions the decoder on the preceding $k$ generated tokens \citep{kirchenbauer2023watermark,aaronson2022aisafety}. While effective, this approach can degrade LM's output quality by introducing noticeable distortions, such as biases toward certain $k$-grams. To address these distortions, distortion-free watermarking was introduced to preserve the LM's output distribution \citep{kuditipudi2024robust}; however, it does not guarantee LM's generation diversity. More recently, undetectable watermarking has been explored \citep{christ2024undetectable}, which prevents detection by adversaries and naturally maintains generation diversity. Despite these advancements, the relationship between distortion-freeness and undetectability has thus far never been clearly defined. To this end, we establish the connection between distortion-freeness and undetectability, and show that many existing watermarks are detectable. One interpretation of this fact is that the watermarking output distribution can be recognized by probabilistic deterministic finite automata (PDFA) and can therefore be learned under the Probably Approximately Correct (PAC) framework.

Our work is closely related to the state-of-the-art watermarking approach introduced by \citet{kuditipudi2024robust}, which uses a cyclic key sequence as noise for unbiased decoding and leverages the \textit{edit distance} (specifically, Levenshtein distance) metric to improve robustness against any edit-based attacks.
However, this method suffers from two notable drawbacks: (1) it reduces generative diversity, often leading to deterministic outputs, and (2) it requires partitioning text into blocks with the time complexity scales quadratically with the block size, which creates a major computational bottleneck.

We introduce a new class of watermarking schemes represented by \emph{probabilistic automata} (PA), with the following key contributions:

\begin{itemize}
  \item Our framework generalizes the cyclic key sequence watermarking of \citet{kuditipudi2024robust} as a special case, which can be modeled as a probabilistic deterministic finite automaton (PDFA) with a simple cyclic topology (see Figure~\ref{fig:seq-vs-pnfa}).

  \item We extend this formulation to \emph{probabilistic non-deterministic finite automata} (PNFA), a strictly more expressive class than PDFAs. Leveraging the fact that the class of languages recognized by PNFAs is not PAC-learnable under the sparse Learning Parity with Noise (LPN) assumption \citep{kearns1994learnability}, we construct an undetectable watermarking scheme.

  \item We instantiate a practical scheme that significantly improves both generation diversity and detection efficiency over \citet{kuditipudi2024robust}: (i) Increasing generation diversity from $\Theta(\lambda)$ to $\Omega(\lambda d^n)$, where $d \geq 1$, $\lambda$ is the key length, and $n$ is the sequence length, and (ii) Reducing detection time complexity from $\Theta(\lambda n k^2)$ to $\Theta(\lambda n)$, where $k$ is the block size.
\end{itemize}

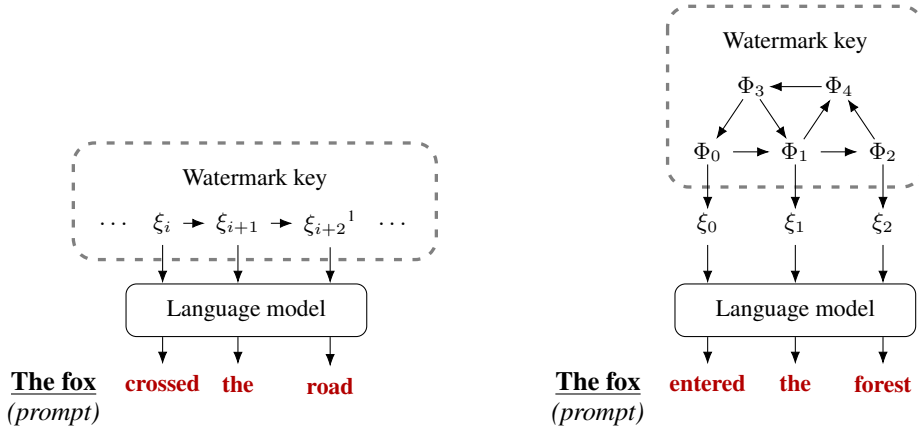
\begin{figure}[tbp]
    \centering
    \vspace{-2.4em}
    \begin{minipage}[t]{0.4\linewidth}
        \centering
        \begin{tikzpicture}[
  every node/.style={font=\small, inner sep=1pt},
  >=Latex,
  baseline=(prompt.base)
]

\node (xi0) at (0,0) {$\xi_{i}$};
\node (xi1) [right=0.5cm of xi0] {$\xi_{i+1}$};
\node (xi2) [right=0.5cm of xi1] {$\xi_{i+2}\footnotemark$};
\node (xi3) [right=0.2cm of xi2] {$\dots$};
\node (xim1) [left=0.2cm of xi0] {$\dots$};

\node (key) [above=0.4cm of $(xim1)!0.5!(xi3)$] {Watermark key};

\coordinate (rect-nw) at ([xshift=-0.3cm, yshift=0.3cm]xim1.west |- key.north);
\coordinate (rect-se) at ([xshift=0.3cm, yshift=-0.3cm]xi3.east |- xi0.south);
\draw[dashed, dash pattern=on 3pt off 4pt, line width=1.25pt, rounded corners=8pt, draw=gray] (rect-nw) rectangle (rect-se);

\node (lm) [
    draw,
    rounded corners,
    below=0.8cm of $(xi0)!0.5!(xi2)$,
    minimum width=3.2cm,
    minimum height=0.7cm
] {Language model};

\node (y0) [below=1.8cm of xi0, text=red!70!black] {\textbf{crossed}};
\node (y1) [below=1.8cm of xi1, text=red!70!black] {\textbf{the}};
\node (y2) [below=1.8cm of xi2, text=red!70!black] {\textbf{road}};

\node (prompt) [left=0.3cm of y0, font=\bfseries] {\underline{The fox}};
\node (promptlabel) [below=0cm of prompt, font=\itshape] {(prompt)};

\foreach \i in {0,1} {
  \pgfmathtruncatemacro{\j}{\i + 1}
  \draw[->, shorten >=3pt, shorten <=3pt] (xi\i) -- (xi\j);
}

\foreach \i in {0,1,2} {
  \draw[->, shorten >=0pt, shorten <=3pt] (xi\i) -- (lm.north -| xi\i);
  \draw[->, shorten >=3pt, shorten <=0pt] (lm.south  -| xi\i) -- (y\i);
}
\end{tikzpicture}
    \end{minipage}%
    \hfill
    \begin{minipage}[t]{0.6\linewidth}
        \centering
        \begin{tikzpicture}[
  every node/.style={font=\small, inner sep=1pt},
  >=Latex,
  baseline=(prompt.base)
]

\node (phi0) at (0,0) {$\Phi_0$};
\node (phi1) [right=0.7cm of phi0] {$\Phi_1$};
\node (phi2) [right=0.7cm of phi1] {$\Phi_2$};
\node (phi3) [above=0.7cm of $(phi0)!0.5!(phi1)$] {$\Phi_3$};
\node (phi4) [above=0.7cm of $(phi1)!0.5!(phi2)$] {$\Phi_4$};

\node (xi0) [below=0.6cm of phi0] {$\xi_0$};
\node (xi1) [below=0.6cm of phi1] {$\xi_1$};
\node (xi2) [below=0.6cm of phi2] {$\xi_2$};

\node (key) [above=0.4cm of $(phi3)!0.5!(phi4)$] {Watermark key};

\coordinate (rect-nw) at ([xshift=-0.3cm, yshift=0.3cm]phi0.west |- key.north);
\coordinate (rect-se) at ([xshift=0.3cm, yshift=-0.3cm]phi2.east |- phi2.south);
\draw[dashed, dash pattern=on 3pt off 4pt, line width=1.25pt, rounded corners=8pt, draw=gray] (rect-nw) rectangle (rect-se);

\node (lm) [
    draw,
    rounded corners,
    below=0.8cm of $(xi0)!0.5!(xi2)$,
    minimum width=3.2cm,
    minimum height=0.7cm
] {Language model};

\node (y0) [below=1.8cm of xi0, text=red!70!black] {\textbf{entered}};
\node (y1) [below=1.8cm of xi1, text=red!70!black] {\textbf{the}};
\node (y2) [below=1.8cm of xi2, text=red!70!black] {\textbf{forest}};

\node (prompt) [left=0.3cm of y0, font=\bfseries] {\underline{The fox}};
\node (promptlabel) [below=0cm of prompt, font=\itshape] {(prompt)};

\draw[->] (phi0) -- (xi0);
\draw[->] (phi1) -- (xi1);
\draw[->] (phi2) -- (xi2);

\foreach \i in {0,1} {
  \pgfmathtruncatemacro{\j}{\i + 1}
  \draw[->, shorten >=3pt, shorten <=3pt] (phi\i) -- (phi\j);
}
\draw[->] (phi3) -- (phi0);
\draw[->] (phi3) -- (phi1);
\draw[->] (phi4) -- (phi3);
\draw[->] (phi1) -- (phi4);
\draw[->] (phi2) -- (phi4);

\foreach \i in {0,1,2} {
  \draw[->, shorten >=0pt, shorten <=3pt] (xi\i) -- (lm.north -| xi\i);
  \draw[->, shorten >=3pt, shorten <=0pt] (lm.south  -| xi\i) -- (y\i);
}

\end{tikzpicture}
    \end{minipage}

    \caption{Comparison of generation of two watermarking schemes where the key follows a cyclic structure \citep{kuditipudi2024robust} on the left and a probabilistic automaton on the right. Each $\Phi$ specifies a probability distribution over $\xi$, with precise definitions provided in later sections.}
    \label{fig:seq-vs-pnfa}
    \vspace{-1.5em}
\end{figure}
\footnotetext{All subscripts are taken modulo $\lambda$.}

\section{Related work}

\paragraph{Watermarking of language models.}
Text watermarking aims at blending a private ``key'' in text generation so that it can be detected by key holders.
Early approaches relied on subtle text modifications using heuristics~\citet{atallah2001natural,atallah2002natural,topkara2005natural}.
Following the autoregressive nature of language models (LMs), recent watermarking methods start to condition the token generation by the key and $k$ prior tokens~\citep{kirchenbauer2023watermark,aaronson2022aisafety,zhao2024provable}.
However, these methods can significantly alter the underlying LM's distribution, for instance, by introducing biases for certain $k$-grams.

Watermarking without changing the next token distribution of the LM on a \emph{single text sample} is defined as \emph{distortion-free} watermarking~\citep{kuditipudi2024robust,hu2024unbiased}.
For example, \citet{kuditipudi2024robust} uses a cyclic key sequence of noise for unbiased decoding and was the first to employ the \textit{edit distance} metric as the alignment between text and the key sequence for detection.
While this method enhances the robustness of watermarking against edit-based attacks, it suffers from two drawbacks of lacking generation diversity, and it relies on partitioning the text into blocks and repeatedly shifting the key sequence to compute edit distances multiple times, posing significant efficiency challenges.

Recent works have constructed \emph{undetectable} watermarks theoretically that require watermarked texts to be indistinguishable across \emph{multiple queries}, yet none of them remains practical. For instance, \citet{christ2024undetectable}'s construction is based on hash functions and not robust to edit-based attacks.
\citet{christ2024pseudorandom} assume a binary symmetric channel model for LMs, which is clearly unrealistic. \citet{golowich2024edit} make an assumption that the vocabulary size scales polynomially with the security parameter (i.e., the size of the ``key''), which does not typically hold in practice. Notably, all of these undetectable watermarking schemes rely on the construction of pseudorandom functions, yet they are based on disparate assumptions and lack a unified framework.

\paragraph{Probabilistic automata.}
Probabilistic automata (PA) are widely studied in computational linguistics that describe distributions with latent variables over finite sequences of symbols.
The class of PA consists of probabilistic nondeterministic finite automata (PNFA) and their proper subclass, probabilistic deterministic finite automata (PDFA). The learnability of these automata has gained profound theoretical interest and practical relevance, particularly in modeling distributions over strings.

\citet{kearns1994learnability} explored the complexity of learning PDFAs within the Probably Approximately Correct (PAC) framework. Specifically, they established that PAC-learning PDFAs with a two-letter vocabulary is at least as hard as PAC-learning noisy parities, which is believed to be computationally hard.
Consequently, the entire class of PDFAs cannot be PAC-learned within polynomial time constraints.
On the other hand, \citet{ron1995learnability,clark2004pac} showed that under the constraints of a certain distinguishability on the states, acyclic PDFA and cyclic PDFAs with bounded expected string length from any states are PAC-learnable.

Although there have been works on learning PDFAs and intermediate forms of PNFAs, few studies address the problem of learning general PNFAs due to their hardness. \citet{terwijn2002learnability} demonstrated that PNFAs are not PAC-learnable if Blum integer factorization is hard. \citet{angluin1991won} showed that learning remains hard even if adversaries have oracle access to membership queries (in the context of watermarking the queries of the detection function).

\paragraph{Edit distance and error correction.}
Edit distance quantifies the similarity between two strings by counting the minimum number of operations required to transform one into the other. This concept extends to formal languages, where the edit distance is defined as the minimum distance between any pair of sequences within the languages. This metric is particularly valuable in error correction, where it helps identify the closest valid string to a given input string. \citet{wagner1974order} introduced an error-correcting algorithm that constructs a finite state automaton to recognize a set of strings.
Their approach uses dynamic programming to compute the edit distance between a string and a regular language.

\section{Preliminaries}
We denote the alphabet of an automaton by $\Sigma$ and the vocabulary of a language model by $\calV$ to explicitly distinguish them. A language is defined as a mapping $\Sigma^* \to [0, 1]$ (or $\calV^* \to [0, 1]$, respectively). Given an automaton $\mathcal{M}$, we denote by $\calL(\mathcal{M}) \subseteq \Sigma^*$ the language recognized by $\mathcal{M}$, i.e., the set of strings accepted by $\mathcal{M}$. We denote the size of the watermark key by $\lambda$. In undetectable watermarking settings, it also serves as the security parameter, which measures the strength of distortion-freeness and undetectability of a watermarking scheme. Further discussion on $\lambda$ will follow in Section~\ref{sec:distortion}. The length of the generated sequence is given by $m=\calO(\text{poly}(\lambda))$, where $\text{poly}(\cdot)$ denotes a polynomial function. For a sequence $\bx = (x_1, \dots, x_n)$, we write $\bx_{i:} = (x_i, \dots, x_n)$ for a suffix, and $\bx_{i:j} = (x_i, \dots, x_j)$ for a contiguous subsequence.

\begin{definition}[Language Model]
    An (autoregressive) language model is defined by a function $\mathsf{Model}:
    \calV^{*}\to \Delta(\calV)$ that maps a sequence of tokens to a probability distribution over $\calV$, where $\Delta(\cdot)$ denotes a probability distribution over a set. Given an initial sequence of tokens (a \textit{prompt}) $\bx \in \calV^*$, the probability of a sequence $\by = (y_{1}, y_{2}, \ldots, y_{m})$  is defined as
    \begin{equation}
        \label{eq:language-model}p(\by) = \prod_{i=1}^{m}p(y_{i}\mid \bx, \by_{1:i-1}),
    \end{equation}
    where $\by_{l:h}= (y_{l}, y_{l+1}, \ldots, y_{h})$ represents a subsequence of $\by$. Each conditional probability is modeled by
    \begin{equation}
        p(\cdot \mid \bx, \by_{1:i-1}) = \mathsf{Model}(\bx, \by_{1:i-1}).
    \end{equation}
    We use the notation $\by \xleftarrow{\mathsf{AR}}\mathsf{Model}(\bx)$ to indicate that $\by$ is autoregressively generated by $\mathsf{Model}$ given $\bx$, following~\eqref{eq:language-model}.
\end{definition}

\begin{definition}[Decoder-based Watermarking Scheme]
    A decoder-based watermarking scheme is a triplet $\mathcal{W}:= (\mathsf{Gen}, \mathsf{Model}^{\mathsf{wat}}, \mathsf{Detect})$ such that:
    \begin{enumerate}
        \item The key generation algorithm $\mathsf{Gen}$ is randomized and takes
            as input $1^{\lambda}$ to generate a secret key $\sk \in \calK$:
            \footnote{The input is given in unary notation to ensure polynomial runtime in $\lambda$. The structure of the key can be arbitrary, with specifics described in later sections.}
            \begin{equation}
                \sk \leftarrow \mathsf{Gen}(1^{\lambda}).
            \end{equation}

        \item The watermarking algorithm $\mathsf{Model}_{\sk}^{\mathsf{wat}}:= (\mathsf{Model}, \Phi, \Gamma)$ is an autoregressive process that consists of an unwatermarked model $\mathsf{Model}$, a noise generator $\Phi : \calK \times \Xi^{*}\times \calV^{*}\to \Delta(\Xi)$, and a decoder $\Gamma : \Xi \times \Delta(\calV) \to \calV$, where $\Xi$ is the noise space.
        At the $i$-th decoding step, $\xi_{i}\in \Xi$ is sampled by
        \begin{equation}
            \label{eq:s-to-xi}
            \xi_{i}\sim \Phi_{\sk}(\bxi_{1:i-1},\bx, \by_{1:i-1}),\end{equation}
        then the next token $y_{i}$ is produced by the decoder
        deterministically given the noise $\xi_i$:
        \begin{equation}
            \label{eq:decoder}
            y_{i}\leftarrow \Gamma(\xi_{i}, \mathsf{Model}(
            \bx, \by_{1:i-1})).
        \end{equation}

        \item The detection algorithm $\mathsf{Detect}_{\sk}$ takes as input $\sk$
            and $\by$, and outputs $\mathsf{true}$ or $\mathsf{false}$.
    \end{enumerate}
\end{definition}
While the above definition presented seems abstract, it covers a wide range of existing watermarking frameworks. Specific examples are provided in Appendix~\ref{sec:analysis}.

Ideally, $\mathsf{Detect}_{\sk}(\by)$ should output $\mathsf{true}$ if $\by$ is generated by $\mathsf{Model}_{\sk}^{\mathsf{wat}}(\bx)$ for some $\bx$, and output $\mathsf{false}$ if $\by$ is independent of $\sk$. The former property is referred to as \textit{completeness} and the latter \textit{soundness}.

As a special case, when $\Phi$ does not depend on the prefix of tokens $\bx$ and
$\by_{1:i-1}$, ~\eqref{eq:s-to-xi} simplifies to
\begin{equation}
\label{eq:agnostic}
    \xi_{i}\sim \Phi_{\sk}(\bxi_{1:i-1}).
\end{equation}
We refer to this case as \textit{model-agnostic} as the distribution of the noise can be decomposed autoregressively with the chain rule and does not depend on the specific model used and therefore can be precomputed before decoding.

All decoder-based watermarking schemes require sufficiently high text entropy; otherwise, the output tends to be deterministic and no watermarks can be embedded. For a detailed discussion of entropy effects, refer to \citet{christ2024undetectable,kuditipudi2024robust}; we do not repeat these efforts here. In the sequel a watermarking scheme always denotes a decoder-based watermarking scheme.

The robustness of a watermarking scheme quantifies its ability to withstand edit-based corruptions to the watermarked data without losing the embedded watermark.

\begin{definition}[Edit Distance]
    The edit distance $d(\bs_{1}, \bs_{2})$ between two sequences $\bs_{1}, \bs_{2}
    \in \calV^{*}$ is the minimum cost of transforming $\bs_{1}$ into $\bs_{2}$
    through a sequence of single-position edit operations, including insertion, deletion,
    and substitution.
\end{definition}

\begin{definition}[Robustness]
    A watermarking scheme is considered robust if, for any watermarked sequence
    $\by$ from $\mathsf{Model}_{\sk}^{\mathsf{wat}}(\bx)$ and any sequence
    $\by'$ with the edit distance bounded by
    $d(\by, \by') \leq \gamma\max(|\by|,|\by'|)$ for some $\gamma > 0$, the detection
    function reliably identifies the watermark:
    \begin{equation}
        \mathsf{Detect}_{\sk}(\by') = \mathsf{true}.
    \end{equation}
\end{definition}

Note that a sequence detected as watermarked is not necessarily generated by the watermarked model. Robustness ensures that a watermarked sequence and its close neighbors remain detectable, but this does not compromise the property of soundness.

\section{Constructing watermarks through probabilistic automata}
\label{sec:construction}

We begin by introducing the relevant definitions that underlie our watermarking constructions.

\begin{definition}[Probabilistic Non-Deterministic Finite Automaton]
A probabilistic non-deterministic finite automaton (PNFA) defined as a tuple $(Q, \Sigma, \delta, \pi_{0}, \pi_{f})$, where
(1) $Q$ is a finite set of states;
(2) $\Sigma$ is a finite alphabet of input symbols;
(3) $\delta: Q \times \Sigma \times Q \to [0, 1]$ is the transition probability function;
(4) $\pi_{0}: Q \to [0, 1]$ defines the initial probability of each state;
(5) $\pi_{f}: Q \to [0, 1]$ defines the final probability of each state.
\end{definition}

\begin{definition}[Probabilistic Deterministic Finite Automaton]
A probabilistic non-deterministic finite automaton (PNFA) $(Q, \Sigma, \delta , \pi_{0}, \pi_{f})$ is a probabilistic deterministic finite automaton (PDFA) if:
(1) $\exists q_{0}\in Q$ such that $\pi_{0}(q_{0}) = 1$ and $\forall q \in Q \setminus \{q_{0}\},$ $\pi_{0}(q) = 0$, and
(2) $\forall q \in Q, \forall a \in \Sigma$, there exists at most one state $q' \in Q$ such that $\delta(q, a, q') > 0$.
\end{definition}

\subsection{Watermarking schemes represented by probabilistic automata}
\label{subsec:pa}
In model-agnostic watermarking schemes, given a secret key $\sk$, $\Phi_{\sk}$ defines a distribution of the random variable $\bxi \in \Xi^*$ by applying \eqref{eq:agnostic} autoregressively. One case of this distribution is the stochastic language recognized by a PA. We model this distribution using a hierarchical automaton. Specifically, for a PA $\calM = (Q, \Sigma, \delta, \pi_{0}, \pi_{f})$ with $\Sigma \subset \Delta(\Xi)$, the process starts from an initial state $q_0$. Each transition produces $\Phi_i \in \Sigma$, a probability distribution over $\Xi$, from which noise $\xi_i \sim \Phi_i$ is sampled. The noise $\xi_i$, represented by a binary sequence, is used for decoding the next token $y_i$ as described in~\eqref{eq:decoder}. Each probability distribution $\Phi_i$ is modeled by a subordinate PA with a binary alphabet. The hierarchical structure allows the PA to model the noise distribution represented by binary alphabet.

\subsection{Constructing watermarks for language models}
\label{sec:constr-lm}
We now elaborate a specific construction of the watermarking scheme. We consider a decoder that uses exponential minimum sampling following \citet{aaronson2022aisafety,kuditipudi2024robust}. The decoder generates the next token based on a noise $\xi$ and the model's output probabilities, which can be formally expressed as
\begin{equation}
    \label{eq:gumbel-decoder}\Gamma(\xi_{i}, \mathsf{Model}(\by_{1:i-1})) = \mathop
    {\arg\min}_{j \in \calV}(\pi_{j} / \log(\mu_{j})),
\end{equation}
where $\pi_{j}$ is the probability assigned by $\mathsf{Model}(\by_{1:i-1})$ to token $j \in \calV$, and $\xi_i = (\mu_1, \dots, \mu_{|\calV|})$ with $\mu_j \overset{\text{i.i.d.}}{\sim} \text{Uniform}[0,1]$. This decoder preserves the model’s categorical distribution at each step.

Transitioning from continuous to binary representation, any real number $z \in [0, 1)$ can be approximated using its binary expansion:
\begin{equation}
    \label{eq:uni-to-bin}z = \frac{1}{2^{c}}\sum_{i=0}^{c-1}2^{i}\sigma_i, \quad \sigma_i\in\{0,1\},
\end{equation}
where $c$ denotes the precision level, and $\sigma_{i}$ represents the $i$-th bit of the binary expansion of $z$.

We introduce a PA to generate a sequence of noise. The PA consists of $\lambda$ virtual states, each corresponding to a subordinate PA that models a specific noise distribution. These states are labeled as $q_{0},q_{1},\dots,q_{\lambda-1}$ with each state $q_{i}$ transitioning to $q_{i+1 \bmod \lambda}, \dots, q_{i+d \bmod \lambda}$ with equal probability, thereby forming a $d$-regular graph. The arrangement of these states is strategically designed for robustness and efficiency.

The subordinate probabilistic automaton is defined over a vocabulary of size $|\mathcal{V}|$, with bitwidth $b$ and $c \geq b$. At each decoding step, it generates a binary noise vector $\xi = (\mu_1, \dots, \mu_{|\mathcal{V}|})$, where each $\mu_i \in \{0,1\}^c$.

The automaton begins at an initial state $q_0$ and terminates at a final state $q_f$, progressing through $|\mathcal{V}|$ layers that each encode a binary vector $\mu_i$. The first layer starts with: $q_0 \to \sigma_{1,1}$, and each layer proceeds through intermediate bitwise states: $\sigma_{i,j} \to \sigma_{i,j+1}, \text{for } 1 \le i \le |\mathcal{V}|,\ 1 \le j < b$, where $\sigma_{i,j}$ encodes the $j$-th bit of $\mu_i$. At $\sigma_{i,b}$, the automaton branches into two parallel Boolean paths: $\sigma_{i,b} \to \iota_{i,b+1}, \sigma_{i,b} \to \hat{\iota}_{i,b+1}$, which continue as: 
$\iota_{i,j} \to \iota_{i,j+1},\ 
\iota_{i,j} \to \hat{\iota}_{i,j+1},\ 
\hat{\iota}_{i,j} \to \iota_{i,j+1},\ 
\hat{\iota}_{i,j} \to \hat{\iota}_{i,j+1}$,
where $\iota_{i,j} = 0$ and $\hat{\iota}_{i,j} = 1$ represent bit encodings of $\mu_i$. Between layers, transitions connect the terminal states of layer $i$ to the initial states of layer $i{+}1$:
$
\label{eq:interlayer}
\iota_{i,c} \to \sigma_{i+1,1},
\hat{\iota}_{i,c} \to \sigma_{i+1,1},
\text{for } b \le j < c,
$
and the automaton concludes after the final layer with $\iota_{|\mathcal{V}|,c}, \hat{\iota}_{|\mathcal{V}|,c} \to q_f$.

For each state, all outgoing transitions have equal probability. As a special case where $d=1$ and sufficiently large $b$ and $c$, the PA produces noise equivalent to the cyclic key sequence watermarking \citep{kuditipudi2024robust}. The PA sampling process is integrated into the token generation, as described in Algorithm~\ref{algo:watermarking}. An illustration of a subordinate PA is provided in Appendix~\ref{app:subpa}.

We now proceed to describe the detection algorithm. We begin by defining a cost following \citet{aaronson2022aisafety} and \citet{kuditipudi2024robust} as
\begin{equation}
    d_{0}(y, \xi) = \log(1 - \mu_{y}),
\end{equation}
where $\xi = (\mu_{1}, \mu_{2}, \dots, \mu_{|\calV|})$ and $\mu_i \in [0,1]$ is represented by its binary expansion using the subordinate PA. For $\Phi \in \Delta(\Xi)$, the cost is defined as
\begin{equation}
\label{eq:d0_dist}
    d_{0}(y, \Phi) = \max_{\xi \in \operatorname{supp}(\Phi)}\{d_0(y,\xi)\}.
\end{equation}
Before we define the Levenshtein distance for PAs, we introduce the necessary definition.

\begin{definition}[Support Automaton]
    The support automaton of a PA $\calM = (Q, \Sigma, \delta, \pi_{0}, \pi_{f})$ is a non-deterministic finite automaton (NFA) $\underline\calM =(Q, \Sigma, \delta, Q_{0}, Q_{f})$, where $Q_{0}$ (respectively $Q_{f}$) denotes the set of initial (respectively final) states, and $\delta \subseteq Q \times \Sigma \times Q$ is the transition function defined as $(q, a, q') \in \delta \Leftrightarrow \phi(q, a, q') > 0.$
\end{definition}

\begin{algorithm}[htbp]
    \caption{Watermarking Schemes Represented by PAs}
    \label{algo:watermarking}
    \begin{algorithmic}[1]
        \Require $\mathsf{Model}$, PA $\calM$, decoder $\Gamma$
        \Ensure Watermarked sequence $\by$
        \State Sample an initial state $q$ from $\calM$.
        \For{each decoding step $i$}
            \State Transition $q$ to the next state and obtain $\Phi_i \in \Delta(\Xi)$.
            \State Sample $\xi_{i} \sim \Phi_i$.
            \State Select the next token $y_{i}$ using $\xi_{i}$ via~\eqref{eq:gumbel-decoder}.
        \EndFor
        \State \Return $\by$
    \end{algorithmic}
\end{algorithm}

\begin{definition}[Generalized Levenshtein Distance]
    For a sequence $\by \in \calV^{*}, \bPhi \in \Delta^*(\Xi), \gamma_i,\gamma_d$, the Levenshtein distance between $\by$ and $\bPhi$ is defined recursively as
    \begin{equation}
        \label{eq:gerneralized-levenshtein}
        d_{L}(\by, \bPhi) = \min
        \begin{cases}
            \gamma_{d} |\by| + \gamma_i |\bPhi|, \text{if }|\by| = 0 \text{ or }|\bPhi| = 0, \\
            d_{L}(\by_{2:}, \bPhi) + \gamma_d,                                        \\
            d_{L}(\by, \bPhi_{2:}) + \gamma_i,                                        \\
            d_{L}(\by_{2:}, \bPhi_{2:}) + d_{0}(y_1, \Phi_1).
        \end{cases}
    \end{equation}
    For a PA $\calM$ defining a support language $\calL({\underline\calM}) \subseteq \Delta^*(\Xi)$, the Levenshtein distance between $\by$ and $\calM$ is defined as
    \begin{equation}
        \label{eq:gerneralized-levenshtein2}
        d_{L}(\by, \calM) = \min_{\bPhi \in \calL({\underline\calM})}\left\{d_{L}(\by ,\bPhi
        )\right\}.
    \end{equation}
\end{definition}

In~\eqref{eq:gerneralized-levenshtein}, the behavior of $\by$ and $\bPhi$ is not symmetric, as the length of $\bPhi$ can be infinite. Therefore, we assign different costs to insertion  ($\gamma_i$) and deletion ($\gamma_d$). Meanwhile, the generalized Levenshtein distance is different from the design in \citet{kuditipudi2024robust}. In their algorithm, the lengths of the key sequence and the text are constrained to be equal, and it may disrupt their alignment when insertions or deletions are involved. In contrast, our definition allows flexible alignment despite difference in lengths.

The naive dynamic programming computes~\eqref{eq:gerneralized-levenshtein2} in $\calO(dm\lambda\log \lambda))$ time \citep{wagner1974order}, whereas our PNFA construction reduces it to $\calO(m\lambda)$. The dynamic programming algorithm is detailed in Appendix~\ref{app:edit-dist}.

\paragraph{Empirical $p$‑value.}
We quantify the alignment between an observed sequence $\by$ and the watermarked model $\mathcal{M}_{\sk}$ by the statistic $\psi \;=\; d_{L}\bigl(\by,\mathcal{M}_{\sk}\bigr)\,,$
and treat as null hypothesis $H_{0}$ that $\by$ is independent of~$\sk$. We sample $N$ independent keys $\{\sk_{i}\}_{i=1}^{N}$ and compute $\psi_{i}\;=\;d_{L}\bigl(\by,\mathcal{M}_{\sk_{i}}\bigr)
\quad (i=1,\dots,N).$
The empirical $p$‑value is then
\begin{equation}
\label{eq:p-value}
\hat p \;=\;\frac{1 + \sum_{i=1}^{N}\mathbf{1}\bigl[\psi_{i}\le\psi\bigr]}{N + 1}.
\end{equation}
A small $p$-value indicates that $\by$ aligns unusually well with the true key (suggesting a watermark), whereas a larger $p$-value is consistent with unwatermarked noise.

\paragraph{An upper bound of $p$-value via Vysochanskij–Petunin Inequality.}
Let $\mu$ and $\sigma^{2}$ denote the sample mean and variance of $\{\psi_{i}\}_{i=1}^{N}$, given by $\mu = \frac{1}{N}\sum_{i=1}^{N}\psi_{i}$ and $\sigma^{2}= \frac{1}{N-1}\sum_{i=1}^{N}(\psi_{i}- \overline \psi)^{2}$, and define the standardized statistic
\begin{equation}
\label{eq:z-score}
    z = (\psi - \mu)/\sigma.
\end{equation}
Under the mild assumption that the null distribution of $d_{L}$ is unimodal, the one‑sided Vysochanskij–Petunin inequality \citep{mercadier2021one} gives the following upper bound:
\begin{equation}
    \Pr[Z \le z] \le
    \begin{dcases}
        4/9(z^2+1)              & \text{if }|z| \ge \sqrt{5/3}, \\
        4/3(z^2+1)- 1/3 & \text{otherwise}.
    \end{dcases}
\end{equation}
We refer to this scheme as WEPA, for Watermarking schemE through Probabilistic Automata. The design of WEPA makes it robust and highly efficient for dynamic programming. As observed, disregarding bitwidth constraints, the generation diversity of WEPA is $\Theta(\lambda d^m)$, which is a substantial improvement over the construction of \citet{kuditipudi2024robust}.

\section{Expressiveness of PA-based watermarking schemes}
\label{sec:distortion}
In this section, we draw connections between distortion-freeness, undetectability, and security levels in watermarking schemes. The security level is measured by the security parameter $\lambda$ that determines the key size and reflects the computational hardness of breaking cryptographic schemes. It is assumed to be known to adversaries, and their running time is expressed as functions of $\lambda$ rather than concrete values. An adversary is considered efficient if it can be modeled as a probabilistic algorithm with running time bounded by a polynomial function of $\lambda$. A function is called \textit{negligible} if it becomes asymptotically smaller than the inverse of any polynomial as $\lambda$ grows, which represents a probability that is practically negligible for large $\lambda$.
Negligible functions can be expressed as
$\text{negl}(\lambda)=\calO(\frac{1}{\text{poly}(\lambda)})$ (i.e., functions that vanish faster than any inverse polynomial).

\paragraph{Distortion-freeness.}
We begin the definition by considering an adversary who has oracle access to the unwatermarked model (i.e., can make unlimited queries) and observes a single instance of watermarked data, and its goal is to determine whether the observed data is watermarked or not. While we make assumptions about the adversary's computational capabilities, we make no assumptions about its strategy, which means the definition provides protection against any computationally bounded adversary. The idea is that the adversary should be unable to extract any partial information that distinguishes watermarked data from unwatermarked data.

In addition, we do not make assumptions about the adversary's prompting strategy. The adversary can manipulate the model's behavior by prompting so that the model outputs arbitrary distribution. This ensures that the security definition remains robust against adversarial control over the model’s output distribution.
Since the adversary can shape the output distribution arbitrarily, the specific input to the model becomes irrelevant---choosing a prompt equates to choosing a model. Therefore, we can safely omit the model’s input in our analysis.

The definition is formalized using the indistinguishability against one-time attacks ($\indot$) game, defined for a watermarking scheme $\calW$, an adversary $\calA$, and the security parameter $\lambda$:
(1) A secret key is generated via $\sk \leftarrow \mathsf{Gen}(1^{\lambda})$;
(2) $\mathcal{A}$ is given an input $1^{\lambda}$ and chooses $\mathsf{Model}
(\cdot)$; \footnote{We omit the input $\bx$ to $\mathsf{Model}$ as choosing a prompt equates to choosing a model as explained.}
(3) a uniform bit $b \in \{0, 1\}$ is selected. A sequence $\by \xleftarrow{\mathsf{AR}}f(\cdot)$ is computed and given to $\calA$, where $f = \mathsf{Model}$ if $b = 0$ otherwise $f = \mathsf{Model}_{\sk}^{\mathsf{wat}}$;
(4) finally, $\mathcal{A}$ outputs a bit $b^{\prime}$.
The advantage of $\mathcal{A}$ in this game is defined as
\begin{equation}
  \text{Adv}_{\calW}^{\indot}(\calA) := \left| \Pr \left[ b=b' \right] - \frac{1}{2}
  \right|.
\end{equation}

\begin{definition}[Distortion-freeness]
  \label{def:distortionfreeness} A watermarking scheme $\mathcal{W}$ is \textit{distortion-free}
  if it is indistinguishable against one-time attacks. Formally, for any
  polynomial-time $\mathcal{A}$,
  \begin{equation}
    \label{eq:distortionfreeness}\text{Adv}_{\calW}^{\indot}(\calA) \le \text{negl}
    (\lambda).
  \end{equation}
  Moreover, a watermarking scheme is considered \textit{perfectly distortion-free}
  if $\text{Adv}^{\indot}_{\mathcal{W}}(\mathcal{A})=0$.
\end{definition}

The following theorem provides an equivalent definition of perfect distortion-freeness.

\begin{theorem}
  \label{thm:perfect-distortion-free} A watermarking scheme is perfectly distortion-free
  iff for every language model that defines language distribution $\calL$,
  every $\by \in \calL$, it follows that
  \begin{equation}
    \label{eq:perfect-distortion-free}\Pr[Y = \by] = \E_{\sk \leftarrow
    \textnormal{\textsf{Gen}}(1^\lambda)}\big [\Pr [ Y =\by \mid K = \sk] \big],
  \end{equation}
  where $K$ is the random variable of the secret key.
\end{theorem}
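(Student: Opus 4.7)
The plan is to reduce the perfect–distortion-freeness condition to a standard statistical-distance characterization, then unpack the two distributions that actually appear in the $\indot$ game. First I would observe that the ``for every language model'' quantifier on the right-hand side of the theorem mirrors the adversary's free choice of $\mathsf{Model}$ in step (2) of the game, so it suffices to fix an arbitrary model and argue the equivalence pointwise in that model. Fixing such a model, I would define $D_0$ as the unwatermarked distribution of $\by \xleftarrow{\mathsf{AR}} \mathsf{Model}(\cdot)$ (so $D_0(\by) = \Pr[Y=\by]$, the LHS of \eqref{eq:perfect-distortion-free}) and $D_1$ as the marginal distribution of $\by \xleftarrow{\mathsf{AR}} \mathsf{Model}^{\mathsf{wat}}_{\sk}(\cdot)$ averaged over $\sk \leftarrow \mathsf{Gen}(1^\lambda)$; unrolling \eqref{eq:s-to-xi}--\eqref{eq:decoder} inside the chain \eqref{eq:language-model} shows $D_1(\by)$ coincides with the RHS.

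Next I would unfold the game by conditioning on the challenge bit:
\begin{equation}
\Pr[b=b'] \;=\; \tfrac{1}{2}\bigl(1 + \Pr_{\by\sim D_1}[\calA(\by)=1] - \Pr_{\by\sim D_0}[\calA(\by)=1]\bigr),
\end{equation}
which gives $\text{Adv}_{\calW}^{\indot}(\calA) = \tfrac{1}{2}\bigl|\Pr_{D_1}[\calA=1]-\Pr_{D_0}[\calA=1]\bigr|$. The reverse direction is then immediate: if $D_0 = D_1$, both arms of the game present identical inputs to $\calA$, so every adversary achieves advantage zero. For the forward direction, I would exhibit the optimal distinguisher $\calA^{*}(\by) = \mathbf{1}[D_1(\by) > D_0(\by)]$, whose advantage equals $\tfrac{1}{2}\Delta(D_0, D_1)$, where $\Delta$ denotes total variation; perfect distortion-freeness forces $\text{Adv}^{\indot}_{\calW}(\calA^{*}) = 0$, so $\Delta(D_0, D_1)=0$, hence $D_0(\by) = D_1(\by)$ for every $\by$, which is exactly \eqref{eq:perfect-distortion-free}.

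The main obstacle, though mild, is quantifier handling together with justifying that ``perfectly distortion-free'' is intended to quantify over \emph{all} adversaries, including unbounded ones; otherwise the indicator $\calA^{*}$ might fail to be admissible and the statistical-distance characterization would collapse to only a computational analogue. I would make this convention explicit, noting that the computational version in Definition~\ref{def:distortionfreeness} only guarantees closeness up to $\text{negl}(\lambda)$, whereas the perfect version demands exact equality. A secondary technical point is Fubini-style commutation of the sum over $\sk$ with the autoregressive product when identifying the game's $b=1$ distribution with $D_1$; this is routine once the sampling order of $\sk$, $\bxi$, and $\by$ is pinned down, but worth spelling out so the per-$\by$ pointwise conclusion is unambiguous.
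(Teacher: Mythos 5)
Your proposal is correct and takes essentially the same route as the paper's proof, just phrased in the cleaner and more standard language of total variation distance. You first rewrite the advantage as $\tfrac12\bigl|\Pr_{D_1}[\calA{=}1]-\Pr_{D_0}[\calA{=}1]\bigr|$ and then exhibit the maximum-likelihood distinguisher $\calA^*(\by)=\mathbf{1}[D_1(\by)>D_0(\by)]$; the paper does exactly this implicitly, partitioning $\calL$ into $\calL_0=\{\by:\Pr[Y=\by]>\E_{\sk}[\Pr[Y=\by\mid K=\sk]]\}$ and its complement, which is the same test. The reverse direction is the same pointwise argument in both. So the decomposition and the key lemma (advantage equals half the TV distance, maximized by the likelihood-ratio test) are identical; you simply make the machinery explicit rather than unrolling the conditional probabilities by hand.

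One remark worth preserving from your write-up: you correctly flag that the game-theoretic implication $\text{Adv}=0 \Rightarrow D_0=D_1$ only follows if the optimal distinguisher is admissible, and $\calA^*$ requires evaluating both $D_0(\by)$ and $D_1(\by)=\E_{\sk}[\Pr[Y=\by \mid K=\sk]]$, which need not be polynomial-time. The paper's Definition~\ref{def:distortionfreeness} quantifies only over polynomial-time adversaries, and its proof silently assumes the partition $\calL_0$ can be implemented; your explicit note that the ``perfect'' case must be read as quantifying over unbounded (or at least distribution-aware) adversaries fills a gap the paper leaves unstated.
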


We defer this proof to Appendix~\ref{app:perfect-distortion-free}. Theorem~\ref{thm:perfect-distortion-free} indicates that the randomness in generating an output from the unwatermarked model can be equivalently represented by first sampling a key from the key space and then generating the sequence conditioned on that key. Consequently, the original distribution of the model's outputs is preserved. However, this condition is often too strict, and even \citet{kuditipudi2024robust}'s construction is not perfectly distortion-free if the text length is greater than $\lambda$. Instead, our definition emphasizes \textit{computational indistinguishability}, which is more attainable and aligns with the limitations of real-world systems.

\paragraph{Undetectability.}
Distortion-freeness ensures that the quality of the language model is not changed by preserving the distribution in one query. It is nonetheless detectable with multiple queries, and a higher level of security is desired. Consider the following indistinguishability against chosen prompt attack ($\indcpa$) game, defined for a watermarking scheme $\calW$, an adversary $\calA$, and the security parameter $\lambda$:
(1) A secret key is generated via $\sk \leftarrow \mathsf{Gen}(1^{\lambda})$;
(2) $\mathcal{A}$ is given an input $1^{\lambda}$ and the oracle access to $\mathsf{Model'}_{\sk}^{\mathsf{wat}}(\cdot)$ for any $\mathsf{Model'}$, and chooses $\mathsf{Model}(\cdot)$;
(3) a uniform bit $b \in \{0, 1\}$ is selected. A sequence $\by \xleftarrow{\mathsf{AR}}f(\cdot)$ is computed and given to $\calA$, where $f = \mathsf{Model}$ if $b = 0$ otherwise $f = \mathsf{Model}_{\sk}^{\mathsf{wat}}$;
(4) finally, $\mathcal{A}$ outputs a bit $b^{\prime}$.
The advantage of $\mathcal{A}$ in this game is defined as
\begin{equation}
  \text{Adv}_{\calW}^{\indcpa}(\calA) := \left| \Pr \left[ b=b' \right] - \frac{1}{2}
  \right|.
\end{equation}

\begin{definition}[Undetectability]
  \label{def:undetectability} A watermarking scheme $\mathcal{W}$ is \textit{undetectable}
  if, for all polynomial-time adversaries, there exists a negligible function
  $\text{negl}$ such that
  \begin{equation}
    \label{eq:undetectability}\text{Adv}_{\calW}^{\indcpa}(\calA) \le \text{negl}
    (\lambda).
  \end{equation}
\end{definition}

By definition it is clear that all undetectable watermarking schemes are distortion-free due to the stronger capabilities of adversaries. This implies that undetectable watermarking maintains the output distribution of the watermarked model unchanged. Conversely, any watermarking scheme that introduces distortion is also detectable.
This further implies that the watermarks in \citet{kirchenbauer2023watermark,zhao2024provable,aaronson2022aisafety,kuditipudi2024robust} are all detectable.

The following theorem highlights the \textit{expressiveness} (i.e., the extent to which the automaton represents various sequences) of PAs for constructing watermarking schemes.

\begin{theorem}
\label{thm:pnfa-undetectable}
(Abridged) Under certain conditions, there exists an undetectable watermarking scheme that can be represented by a PA if sparse LPN is hard.
\end{theorem}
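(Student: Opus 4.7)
The plan is to establish undetectability by reducing it to the pseudorandomness of the noise distribution produced by the PNFA, and in turn to the hardness of sparse LPN. First I would observe that in a model-agnostic PA-based scheme, the decoder $\Gamma$ of~\eqref{eq:gumbel-decoder} is deterministic in its noise argument and preserves the model's categorical distribution exactly when $\xi_i$ is uniform on $\Xi$. Hence any $\indcpa$ adversary $\calA$ against $\calW$ induces a distinguisher $\calD$ between (i) noise sequences $\bxi\sim\Phi_{\sk}$ generated by the PNFA and (ii) truly uniform noise over $\Xi^{*}$: given a stream of $\xi_i$, $\calD$ simulates the watermarked/unwatermarked oracles for $\calA$ by running $\Gamma$ on adversarially chosen prompts. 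Thus it suffices to build a PNFA whose output language is computationally pseudorandom under sparse LPN.

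Next I would instantiate $\calM_{\sk}$ so that each emitted bit is a sparse-LPN sample tied to $\sk$. Concretely, sparse LPN with sparsity $k=\Theta(\log\lambda)$ and inverse-polynomial noise rate yields a weak PRG whose individual output bits are inner products of $\sk$ with a sparse index set $S\subseteq[\lambda]$, masked by Bernoulli noise. This parity-with-noise computation can be compiled into a PNFA of polynomial size by using non-determinism to \emph{guess} $S$ (each non-deterministic branch corresponds to a choice of $k$ key-coordinates), performing a running XOR along the branch, and then flipping the final bit with the prescribed noise probability via a probabilistic transition. Stacking such gadgets in series and interleaving them with the bitwise layers of the subordinate PA from Section~\ref{sec:constr-lm} produces the $c$-bit strings $\mu_j$ required by~\eqref{eq:uni-to-bin}. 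By construction the marginal distribution of each $\mu_j$ is statistically close to uniform on $\{0,1\}^c$, which preserves the distortion-freeness of a \emph{single} query and sets up the hybrid argument for multiple queries.

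The undetectability reduction is then standard: if $\calA$ wins $\indcpa$ with non-negligible advantage, a hybrid argument across the polynomially many queries yields a distinguisher against a single block of PNFA-generated bits, which in turn violates the pseudorandomness of sparse-LPN samples. The ``certain conditions'' in the statement correspond to the usual entropy precondition on $\mathsf{Model}$ (so that $\Gamma$ is sensitive to $\xi$, echoing \citet{christ2024undetectable,kuditipudi2024robust}), and the parameter regime in which sparse LPN is conjectured hard. Completeness and robustness follow from the fact that the support automaton $\underline{\calM_{\sk}}$ still defines a regular language on which the generalized Levenshtein distance~\eqref{eq:gerneralized-levenshtein2} is computable, exactly as in WEPA.

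The main obstacle will be the PNFA construction: compressing a cryptographic PRG into a polynomial-state PNFA while simultaneously (i) keeping the per-symbol marginals uniform (so model-agnostic decoding via~\eqref{eq:gumbel-decoder} is truly distortion-free), (ii) keeping $\calL(\underline{\calM_{\sk}})$ expressive enough that the adversary cannot PAC-learn $\Phi_{\sk}$, yet (iii) sparse enough that detection via edit distance retains polynomial runtime and meaningful soundness. I expect this step to require auxiliary lemmas bounding the support size of the non-deterministic guessing gadget and showing that the induced support language has small ambiguity, so that the reduction in Theorem~\ref{thm:perfect-distortion-free}'s spirit lifts from a single query to the multi-query setting required by Definition~\ref{def:undetectability}.
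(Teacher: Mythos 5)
Your high-level strategy (use sparse-LPN hardness to make the PA-generated noise stream computationally pseudorandom, then reduce an $\indcpa$ distinguisher to an LPN solver) matches the paper's spirit, but the core PNFA gadget you propose does not work. You suggest encoding a noisy parity $f_{\bs}(\bx)=\bs\cdot\bx\bmod 2$ with $\|\bs\|_1=\Theta(\log\lambda)$ by ``using non-determinism to guess $S$,'' one branch per choice of $k=\Theta(\log\lambda)$ key-coordinates. There are $\binom{\lambda}{\Theta(\log\lambda)} = \lambda^{\Theta(\log\lambda)}$ such choices, so this automaton has super-polynomial size, violating the implicit requirement that the PA be polynomially bounded in $\lambda$. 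Moreover, even if you could prune the branches, a PNFA that nondeterministically \emph{picks} a support and emits the corresponding parity does not realize the noisy-parity distribution for a \emph{fixed} secret $\bs$; it realizes a mixture over all possible $\bs$, which is a different (and trivially learnable) distribution. The correct construction hard-wires $\bs$ into the transition structure: a state is roughly a pair (position within a length-$\lambda$ buffer, running parity over the support coordinates), giving $\calO(\lambda)$ states; the probabilistic transition that flips the final comparison bit with probability $q$ supplies the LPN noise.

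The paper's construction also differs in a way that your sketch underestimates. Rather than ``interleaving parity gadgets with bitwise layers'' so that every bit carries signal, the paper embeds watermark information only at positions $i$ with $(\lambda+1)\mid i$: positions in between emit \emph{unmodified} uniform noise pairs, and their comparison bits form the LPN challenge vector $\bx$; only the $\lambda+1$-st bit is biased toward the noisy parity $f_{\bs}(\bx)$. This sparsity is what makes the per-query marginal exactly (not just statistically-close-to) uniform at non-watermark positions and yields a clean completeness/soundness analysis: a Chernoff bound over the $t=\Omega(n/\lambda)$ parity positions (using the explicit entropy precondition $\underline{H}(\mathsf{Model})>2+\log_2(1-q)-\frac{3-4q}{4(1-q)}\log_2(3-4q)$ to lower-bound the match probability by $\tfrac12+c\lambda^{-1}$), and a Hoeffding bound for unwatermarked text. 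Your proposal gestures at ``completeness and robustness follow'' without supplying the detector or these bounds; the paper's detector is an empirical match-rate threshold test, not the generalized Levenshtein distance at all for this construction. Finally, the paper's undetectability argument runs through KL-PAC learnability (a distinguisher would yield a KL-PAC learner for the noisy-parity distribution, contradicting Assumption~\ref{assump:sparse-parity} via Proposition~\ref{prop:weak-prf}), which is not the same as a generic multi-query hybrid over PRG blocks and avoids the need to argue that individual blocks are themselves pseudorandom.
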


See Appendix~\ref{app:pnfa-undetectable} for the full statement and proof. With this framework, the cyclic key sequence watermarking scheme \citep{kuditipudi2024robust} can be represented by a PDFA with $\lambda$ states and cyclic transitions, though its expressiveness is limited by its simple structure. To achieve higher expressiveness, we consider using PNFAs, which are strictly more expressive than PDFAs.

\section{Empirical evaluation}
\label{sec:exp}

We conducted an empirical evaluation of WEPA’s statistical properties and robustness on LLaMA-3.2-3B \citep{dubey2024llama} and Mistral-7B (v0.3) \citep{jiang2023mistral} language models. In line with prior works \citep{kirchenbauer2023watermark,kuditipudi2024robust}, we generated watermarked text continuations from the news-like subset of the C4 dataset \citep{raffel2020exploring}. We compared WEPA with three baselines: cyclic key sequence watermarking (EXP) with exponential minimum sampling \citep{kuditipudi2024robust}, unigram green-red set watermarking (KGW) \citep{kirchenbauer2023watermark}, and unbiased watermarking (Unbiased) \citep{hu2024unbiased}. Notably, the KGW baseline is not directly comparable, as it may introduce noticeable distortion. For WEPA and EXP, we computed $p$-values via~\eqref{eq:p-value} with a sample size of 10,000 for consistency. For Unbiased, we report the upper bound of $p$-values as described in \cite{hu2024unbiased}. All methods were evaluated using their strongest hyperparameter settings as recommended by the original authors; details are provided in Appendix~\ref{app:hypermeters}. For each experiment, we report the $1/3$, $1/2$ (median), and $2/3$ quantiles of $p$-values for watermarked text across $200$ samples. Further empirical results are included in Appendix~\ref{app:exp}.

\subsection{Varying text lengths}
\label{sec:varying-text-lengths}
We varied the generation length of watermarked text from $4$ to $20$ tokens, as shown in Figure~\ref{fig:varying-lengths}. The results highlight differences between the two language models due to variations in their generation entropy. Among the methods, WEPA ($d=1$) and EXP consistently achieve the strongest detection performance. WEPA ($d=2$) performs slightly worse, followed by KGW. The Unbiased method lags behind, especially on shorter sequences. We further report ROC-AUC and TPR@1\%FPR in Appendix~\ref{app:varying-length-aucroc}. Notably, watermarking short text is inherently more challenging than long text. For completeness, we also include experiments on longer sequences (up to 200 tokens) in Appendix~\ref{app:vary-length-long}.

\begin{figure*}[t]
  \centering
    \vspace{-2.2em} %
  \begin{minipage}[b]{0.67\linewidth}
    \centering
    \includegraphics[width=\linewidth]{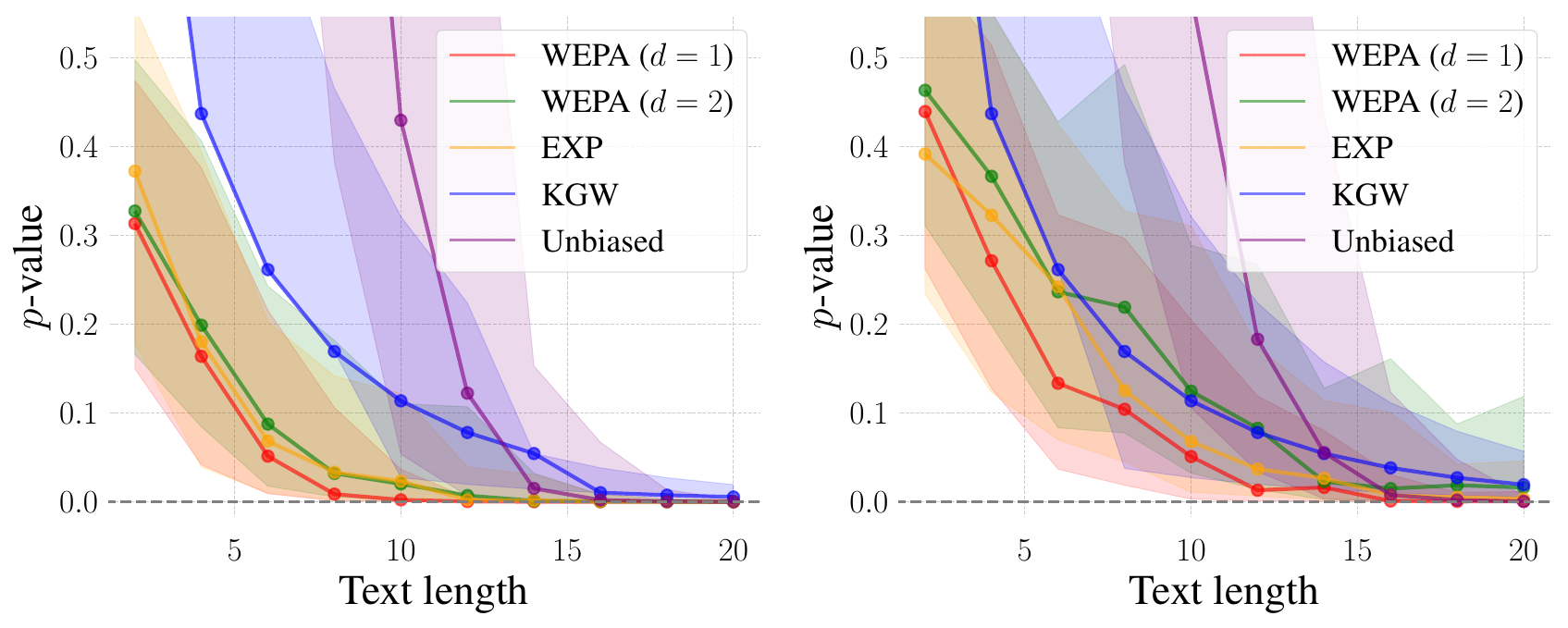}
    \vspace{-1em} %
    \captionof{figure}{Median $p$-values varying text lengths on LLaMA-3B (left) and Mistral-7B (right).}
    \label{fig:varying-lengths}
  \end{minipage}%
  \hfill
  \begin{minipage}[b]{0.31\linewidth}
    \centering
    {\footnotesize
    \setlength{\tabcolsep}{3pt}
    \begin{tabular}{@{}lccc@{}}
      \toprule
      \textbf{Scheme} & \textbf{Time} (s) \\
      \midrule
      WEPA ($d=1$) & 7.81 \\
      WEPA ($d=2$) & 8.22 \\
      WEPA ($d=1,b=6$) & 10.60 \\
      \midrule
      EXP (our impl.) & 2039.56 \\
      EXP & 49024.57 \\
      \bottomrule
    \end{tabular}
    }
    \captionof{table}{Detection efficiency comparison. Unless otherwise specified, \texttt{float32} is used for $b$ in WEPA.}
    \label{tab:efficiency}
    \end{minipage}
    \vspace{-0.8em}
\end{figure*}

\begin{figure}[tbp]
\vspace{-0.2em}
    \centering
    \includegraphics[width=\linewidth]{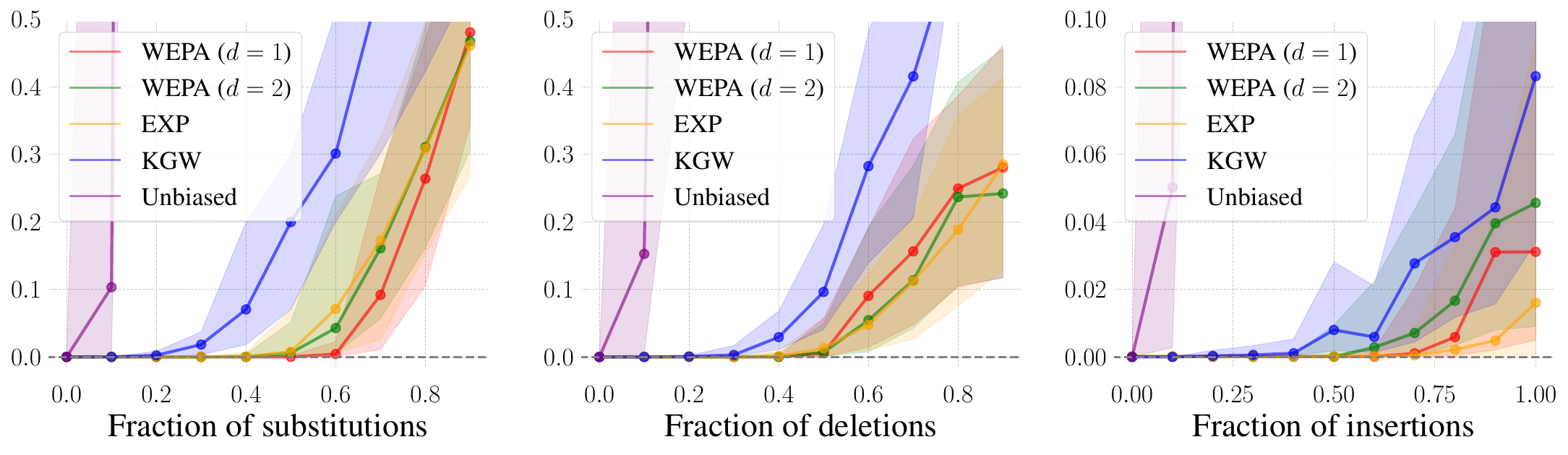}
    \caption{Median $p$-values under substitution (left), deletion (middle), and insertion (right) attacks.}
    \label{fig:corruption}
    \vspace{-1.5em}
\end{figure}

\subsection{Robustness to edit-based attacks}
\label{sec:robust}
We evaluated the robustness against three types of edit-based attacks: (1) substitution, (2) deletion, and (3) insertion, each applied by randomly corrupting a fraction of tokens. For substitution and insertion, replacement tokens were sampled uniformly from the vocabulary. No padding was introduced, nor was truncation applied following deletions or insertions. All experiments were conducted on sequences of length $50$ tokens.

Figure~\ref{fig:corruption} reports results on LLaMA-3B, while we defer the results on Mistral-7B in Appendix~\ref{app:corruption2}. WEPA ($d=1$) performs slightly better than the other methods across all attack types, likely due to its flexible alignment mechanism. In contrast, EXP performs slightly worse, as it requires exact alignment between the generated text and a fixed-length key sequence. KGW shows reduced robustness, likely due to its unigram-based design. The Unbiased method is not robust under these attacks, as it was not designed to handle edit-based perturbations.

We note that our watermark is specifically designed to resist edit-based perturbations, and does not aim to defend against semantic or paraphrasing attacks. As such, we do not include experiments on paraphrasing, which fall outside the scope of our work.

\subsection{Efficiency}

We compared the runtime of the detection algorithms for WEPA ($d=1$), WEPA ($d=2$) and EXP. For WEPA ($d=2,b=6$), we choose $b=6$ as the bitwidth does not affect efficiency. The results are presented in Table~\ref{tab:efficiency}.
For EXP, we implemented an optimization with token discretization. Each algorithm was evaluated on a sample from C4 dataset with text length of $256$ using LLaMA's tokenizer. WEPA is significantly faster than EXPEXP as predicted by the different time complexities.

\section{Conclusion}
\label{sec:conclusion}
We introduced a class of watermarking schemes constructed through probabilistic automata. Within this framework, we instantiated WEPA, a practical watermarking method that achieves both improved generation diversity and more efficient detection. Empirical results further demonstrate its effectiveness and efficiency. Furthermore, by extending to probabilistic non-deterministic finite automata, we established an undetectable watermarking scheme.

\section*{Acknowledgements}
Our work is sponsored in part by NSF CAREER Award 2239440, NSF Proto-OKN Award 2333790, Sponsored Research Projects from companies like Cisco and eBay, as well as generous gifts from Google, Adobe, and Teradata. Any opinions, findings, and conclusions or recommendations expressed herein are those of the authors and should not be interpreted as necessarily representing the views, either expressed or implied, of the U.S. Government. The U.S. Government is authorized to reproduce and distribute reprints for government purposes not withstanding any copyright annotation hereon.

\bibliography{references}
\bibliographystyle{iclr2026_conference}

\newpage
\appendix
\section{Limitations}
\label{sec:limitations}
Our work has several limitations that suggest directions for future research. First, the lack of tight analytical bounds on the $p$-value remains a challenge, as the edit distance metric is inherently difficult to analyze. Second, the detection algorithm requires knowledge of the private key, which limits detection to only the key holder. Developing an asymmetric decoding and detection system could improve security.

\section{Additional empirical results}
\label{app:exp}

\subsection{Varying text lengths on long texts}
\label{app:vary-length-long}
We evaluated WEPA and EXP on longer text sequences, as presented in Figure~\ref{fig:varying_length_long}. For cases where the empirical $p$-values became exponentially small, direct computation was infeasible; instead, we report upper bounds estimated via \eqref{eq:z-score}. Due to the looseness of this bound, these values are not directly comparable to KGW. Nonetheless, the results reveal consistent trends with those in Figure~\ref{fig:varying-lengths}. Notably, WEPA ($d=1$) achieves the strongest performance, while WEPA ($d=2$) exhibits slightly lower accuracy compared to EXP.
\begin{figure}[htbp]
\centering
\includegraphics[width=\linewidth]{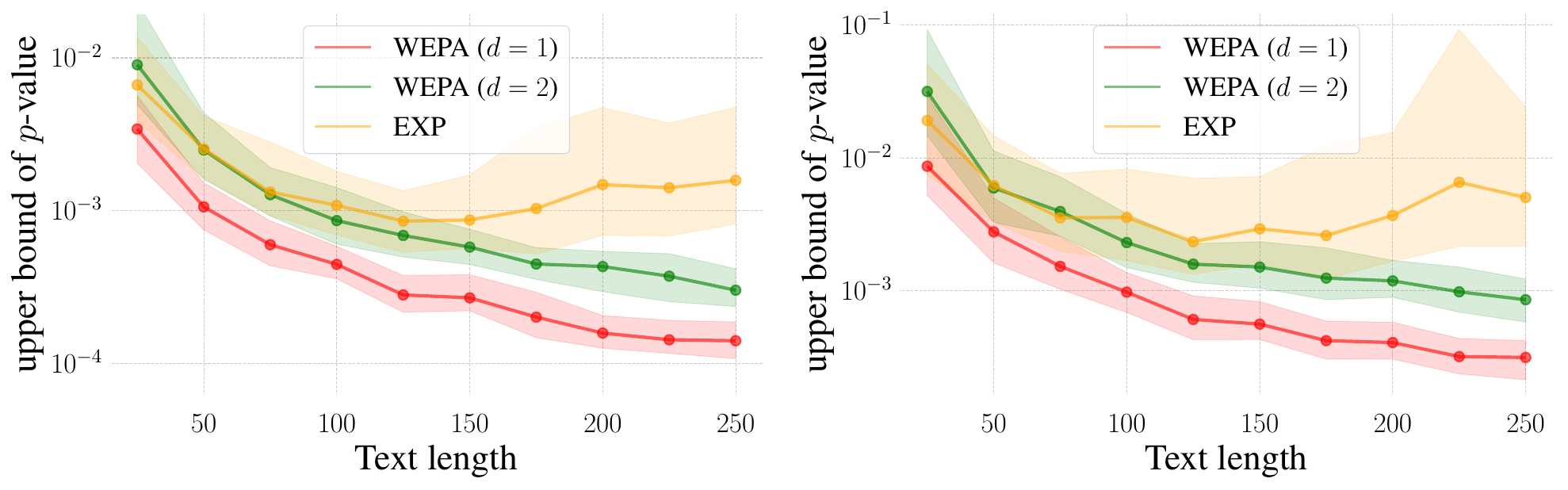}
\captionof{figure}{Median $p$-values varying text lengths on LLaMA-3B (left) and Mistral-7B (right).}
\label{fig:varying_length_long}
\end{figure}

\subsection{Perplexity evaluation}
Figure \ref{fig:ppl} presents the perplexity distributions across watermarking methods. KGW noticeably alters the text, while WEPA and Unbiased maintain perplexity levels comparable to unwatermarked outputs.

\begin{figure}[t]
    \centering
    \includegraphics[width=0.8\linewidth]{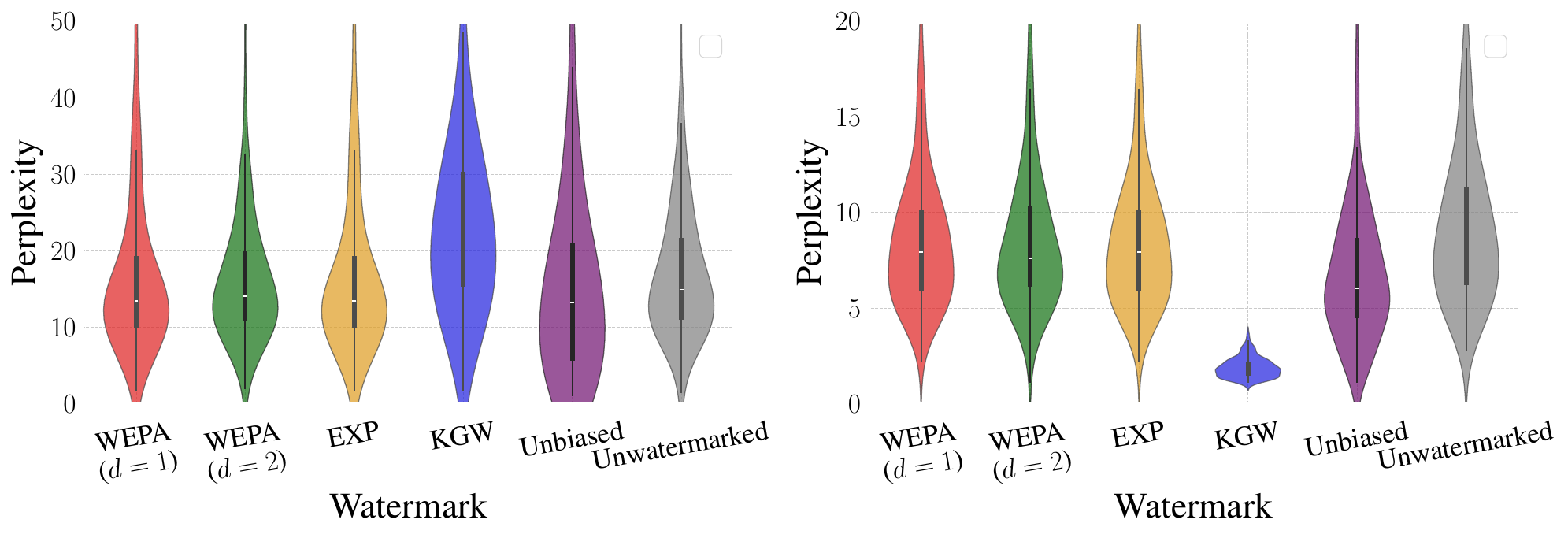}
    \caption{Perplexity distribution of each watermarking scheme on LLaMA-3B (left) and Mistral-7B (right).}
    \label{fig:ppl}
\end{figure}

\subsection{Varying text lengths with supplemental evaluation metrics}
\label{app:varying-length-aucroc}
Figure~\ref{fig:varying-text-length-aucroc} presents ROC-AUC results across varying text lengths, while Figure~\ref{fig:varying-text-length-tpr-at-fpr} reports the corresponding TPR@1\%FPR results. Overall, WEPA ($d=1$) achieves performance comparable to EXP on both metrics, with WEPA ($d=2$) performing even better. Although the Unbiased method attains the highest scores on these metrics, it is not robust to edit-based attacks.

\begin{figure}[htbp]
\centering
\includegraphics[width=\linewidth]{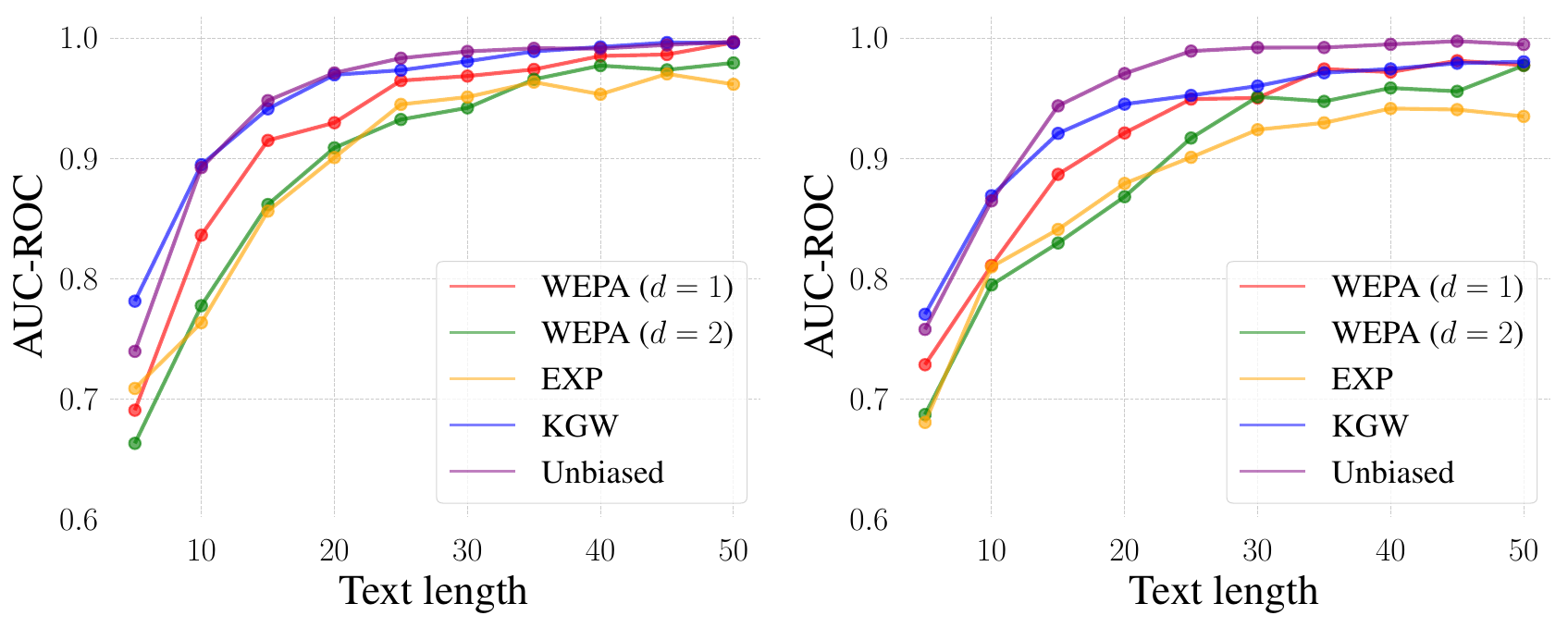}
\captionof{figure}{ROC-AUC results varying text lengths on LLaMA-3B (left) and Mistral-7B (right).}
\label{fig:varying-text-length-aucroc}
\end{figure}

\begin{figure}[htbp]
\centering
\includegraphics[width=\linewidth]{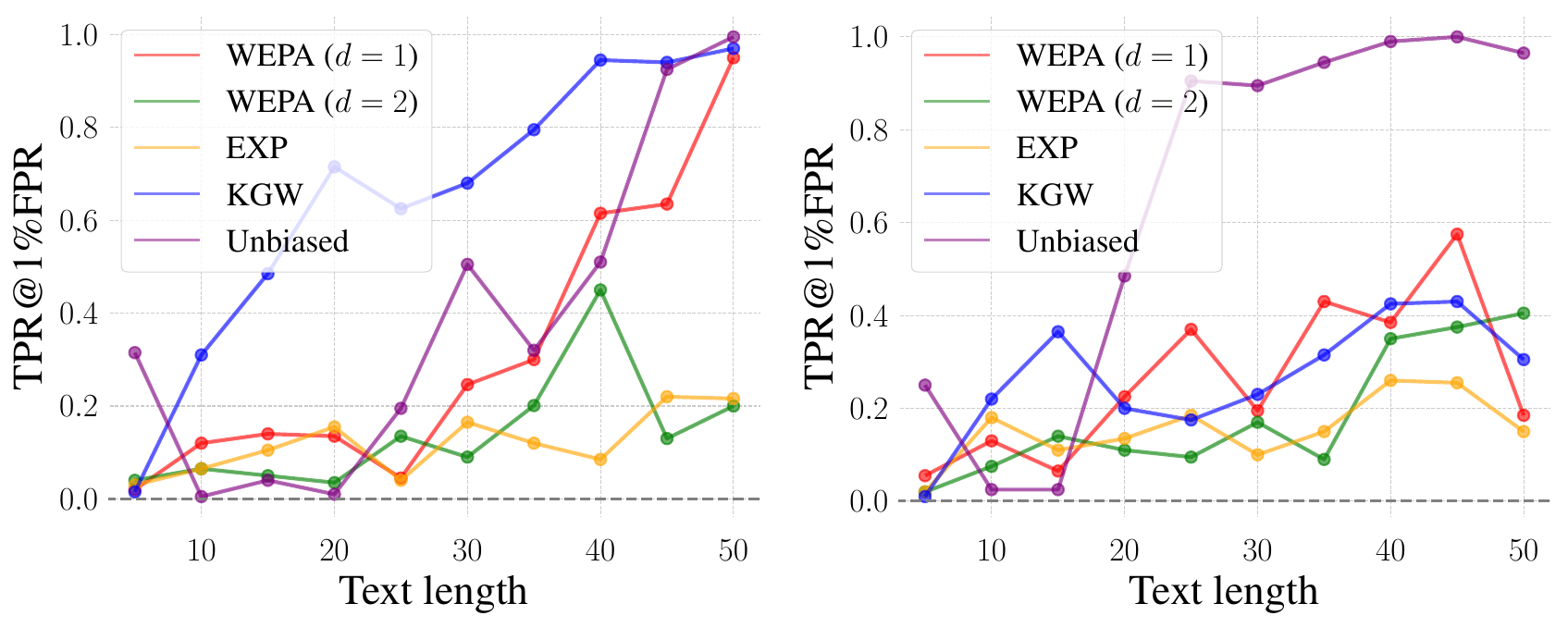}
\captionof{figure}{TPR@1\%FPR results varying text lengths on LLaMA-3B (left) and Mistral-7B (right).}
\label{fig:varying-text-length-tpr-at-fpr}
\end{figure}

\subsection{Varying key length}
We varied the key length ($\lambda$) from $2^4$ to $2^{12}$ while keeping the text length fixed at $8$, as shown in Figure~\ref{fig:varying_lambda}. Consistent with the findings of \citet{kuditipudi2024robust}, we observe that the $p$-value increases with $\lambda$.

\begin{figure}[htbp]
\centering
\includegraphics[width=\linewidth]{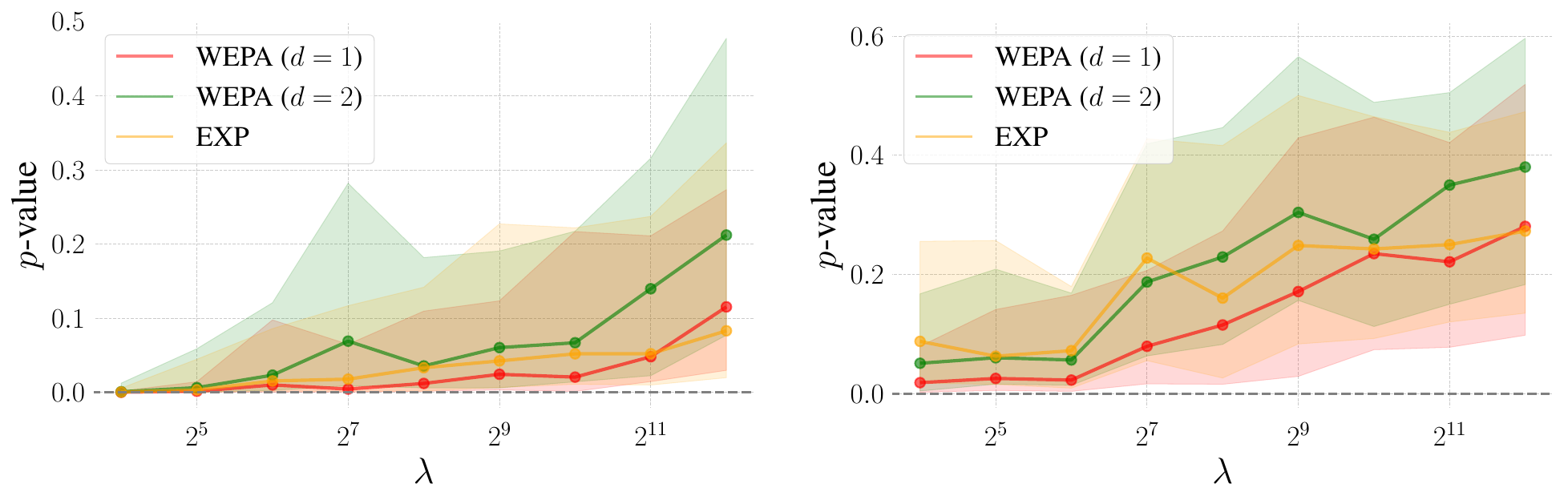}
\captionof{figure}{Median $p$-values varying $\lambda$s on LLaMA-3B (left) and Mistral-7B (right).}
\label{fig:varying_lambda}
\end{figure}

\subsection{Varying bitwidth}
We varied the bitwidth ($b$) of WEPA from $2$ to $12$ while keeping the text length fixed at $8$, as shown in Figure~\ref{fig:varying_bits}. The results indicate that a higher bitwidth improves performance with sacrifice of text diversity. The performance stabilizes when $b \geq 6$.
\begin{figure}[htbp]
\centering
\includegraphics[width=\linewidth]{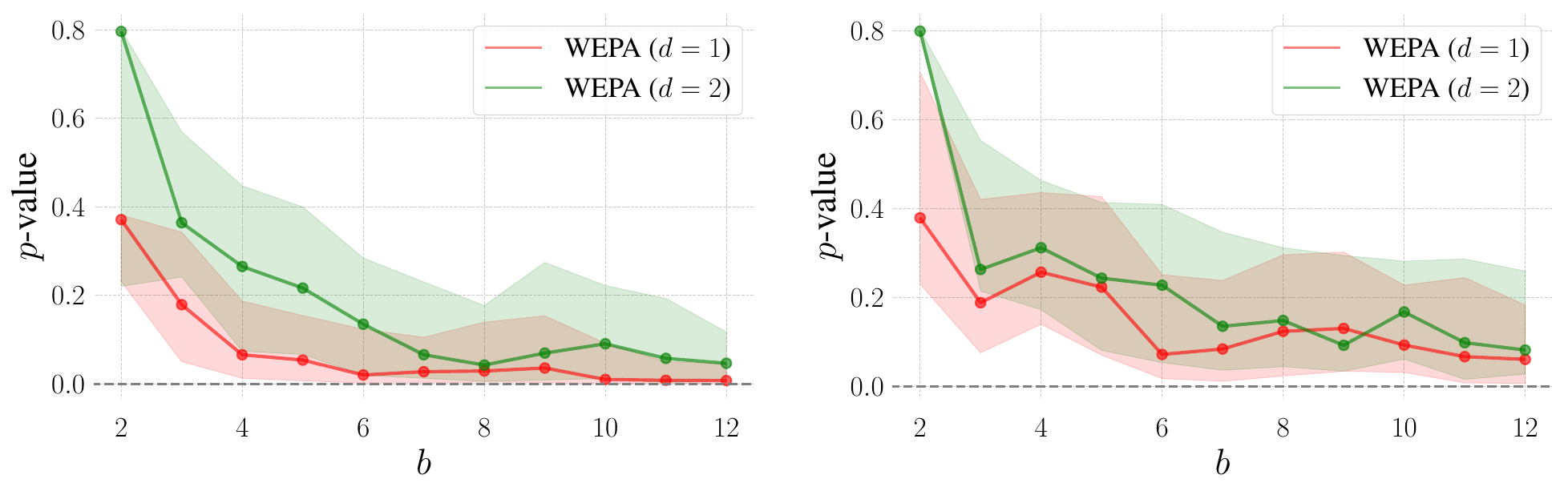}
\captionof{figure}{Median $p$-values varying bitwidths on LLaMA-3B (left) and Mistral-7B (right).}
\label{fig:varying_bits}
\end{figure}

\subsection{Robustness to edit-based attack on Mistral-7B}
\label{app:corruption2}
We evaluated robustness against edit-based attacks on Mistral-7B, following the setup in Section~\ref{sec:robust}, with results shown in Figure~\ref{fig:corruption2}. Under a lower-entropy model, KGW demonstrates greater stability. Overall, these schemes exhibit performance similar to the LLaMA-3B model.

\begin{figure*}[htbp]
    \centering
    \includegraphics[width=\linewidth]{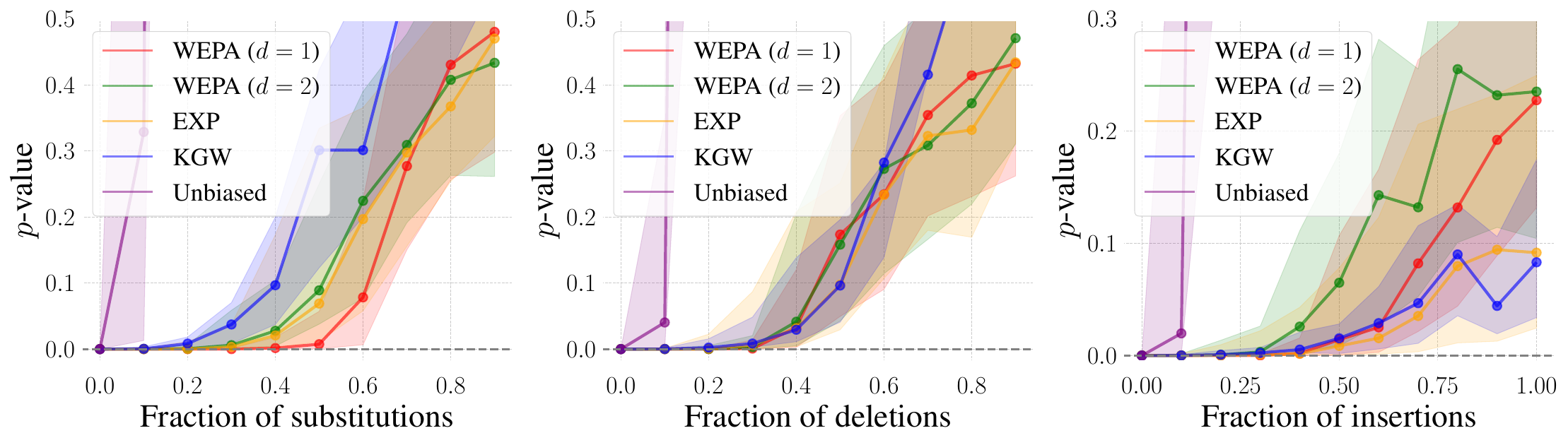}
    \vspace{-1.5em}
    \caption{Median $p$-values under random substitution (left), deletion (middle),
    and insertion (right) attacks.}
    \label{fig:corruption2}
\end{figure*}

\section{Experiment Setup}
\subsection{Computation resources}
\label{app:res}
Experiments were run on an internal cluster with dual AMD EPYC 7453 28-core CPUs (56 cores total), 1\,TiB RAM, and 8 NVIDIA GeForce GPUs (24\,GiB each, CUDA 11.6). Peak GPU memory usage reached approximately 15\,GiB per device. The system provided ample memory and storage, with over 800\,GiB RAM available during runs. However, our experiments primarily rely on GPU resources for language model inference, and do not require high-performance CPUs.

\subsection{Hyperparameter settings}
\label{app:hypermeters}

We detail the hyperparameter configurations used for all methods evaluated in our experiments. For WEPA, we set the key length $\lambda$ to $256$. For the cyclic key sequence watermarking baseline (EXP), we used a key sequence length of $256$ and applied exponential minimum sampling with a soft Levenshtein cost parameter $\gamma = 0$. For the green-red set watermarking baseline (KGW) \citet{kirchenbauer2023watermark}, we the green set fraction is set at $0.25$ and the logit bias $\delta = 2$. For the unbiased watermarking method of \citet{hu2024unbiased}, we used $\delta$-reweighting.

For WEPA, unless specified, we used the key length of $\lambda$ $256$ and \texttt{float32} type to mimic the behavior when $b$ is sufficiently large. For Levenshtein distance, we set the deletion cost $\gamma_d = 0$ and insertion cost $\gamma_i=2$. Notably, $\gamma_i$ cannot be too small as a lower insertion cost allows tokens to align with arbitrary states with minimal penalty. For each experiment, we computed the $z$-scores for watermarked text across $100$ samples.

\section{Implementation details}
\label{app:imp-details}

\subsection{Implementation of Levenshtein distance calculation}
\label{app:edit-dist}
Algorithm~\ref{algo:edit_dist_dp} presents the dynamic programming for computing the Levenshtein distance used in WEPA. The $\text{cost}(\cdot, \cdot)$ array stores the precomputed costs between substates and tokens defined by~\eqref{eq:d0_dist}.

\begin{algorithm}[H]
    \caption{Levenshtein Distance Calculation with Dynamic Programming}
    \label{algo:edit_dist_dp}
    \begin{algorithmic}[1]
    \Procedure{LevenshteinDistance}{$\by$, $\lambda$, $d$, $\text{cost}(\cdot, \cdot)$, $\gamma_d$, $\gamma_i$}
        \State $m \gets |\by|$
        \State $f, g \gets \text{array of size } \lambda \text{ initialized to } 0$ \Comment{Assume $f$ and $g$ are cyclic.}
        \For{$i \gets 1 .. m$}
            \For{$u \gets 0 .. \lambda-1$}
                \State $d_0 \gets \text{cost}(u, y_{i-1})$  %
                \State $f_{u} \gets g_{u} + \gamma_d$ \Comment{Deletion}
                \For{$v \gets u - d .. u-1$}
                    \State $f_{u} \gets \min(f_{u}, g_{v} + d_0)$ \Comment{Substitution}
                \EndFor
            \EndFor
            \State $u^* \gets \arg\min (f_{:})$
            \For{$u \gets u^* + 1 .. \lambda-1$}
                \For{$v \gets u - d .. u-1$}
                    \State $f_{u} \gets \min(f_{u}, f_{v} + \gamma_i)$ \Comment{Insertion}
                \EndFor
            \EndFor
            \For{$u \gets 0 .. u^*-1$}
                \For{$v \gets u - d .. u-1$}
                    \State $f_{u} \gets \min(f_{u}, f_{v} + \gamma_i)$ \Comment{Insertion}
                \EndFor
            \EndFor
            \State $f, g \gets g, f$
        \EndFor
        \State \Return $\min(f_{:})$
    \EndProcedure
    \end{algorithmic}
\end{algorithm}

While the innermost loop can be optimized using a monotonic queue for a complexity of $\calO(m\lambda)$, we do not carry out this optimization in our implementation in practice since the degree $d$ is usually small.

\section{Additional analysis of existing watermarking schemes}
\label{sec:analysis}

In this section, we provide the alternative analysis of existing watermarking schemes through the lens of probabilistic automata (PA).

\subsection{Watermarking schemes under uniform models}

We assume a model that follows a uniform distribution over $\calV$ so that $\mathsf{Model}(\cdot)=\text{Cat}(1/|\calV|,\dots,1/|\calV|)$, where $\text{Cat}$ denotes the categorical distribution. Then we proceed to analyze the output distribution of existing watermarking schemes under uniform models.

\paragraph{$k$-gram-based watermarking.}
$k$-gram-based watermarking has been introduced in many works \citep{kirchenbauer2023watermark,zhao2024provable,aaronson2022aisafety}.
Generally speaking, the core concept is to condition the next token on a window of $k$ prior tokens. A hash function $h: \calK \times \calV^{k}\to \mathbb{Z}$ generates the noise based on the context of the $k$ prior tokens. Then $\Phi$ computes the hash and alters the distribution and generates a variable:
\begin{equation}
    \label{eq:n-gram-ori}\xi_{i}\sim \Phi(\by_{i-k:i-1})=\Phi'(h_{\sk}(\by_{i-k:i-1}
    )),
\end{equation} where $\Phi':\bbZ \to \Delta(\Xi)$. The noise, $\xi_{i}$ may be modeled as a uniform variable from $[0,1]$ used for inverse transform sampling with a partition over the vocabulary \citep{kirchenbauer2023watermark, zhao2024provable}, or a Gumbel variable \citep{aaronson2022aisafety}.

Under uniform models, the output of a watermarking scheme can be represented by a PA $\calM = (Q, \Sigma, \delta , \pi_{0}, \pi_{f})$, with the state space $Q = \calV^{k}$ and $\Xi=\calV$, the transition function $\delta: Q \times \Sigma \times Q \to [0, 1]$ is defined as
\begin{equation}
    \begin{aligned}
        \delta(\by_{i-k:i-1}, y_{i}, \by_{i-k+1:i}) = \bbP(& Y= y_i \mid \by_{i-k:i-1}), \\
        & Y \sim \Phi(\by_{i-k:i-1}).
    \end{aligned}
\end{equation}
As each $\by_{i-k:i-1}$ and $y_i$ uniquely determines $\by_{i-k+1:i}$, $\calM$ is a PDFA.

\paragraph{Cyclic key sequence watermarking.}
\citet{kuditipudi2024robust} introduced a watermarking scheme that uses a cyclic sequence of noise $\dots\xi_{\lambda-1}\xi_{0}\xi_{1}\dots\xi_{\lambda-1}\xi_{0}\dots$ starting from random position in watermarks within generated text, where each $\xi_{i} \in [0,1]$ for inverse transform sampling or $\xi_i \in [0,1 ]^{|\calV|}$ for exponential minimum sampling. The noise sequence is referred to as the key sequence.

Since the key sequence is cyclic, the output under uniform models also follows a repeated structure and can be recognized by a PDFA with at most $4\lambda$ states, which can be constructed with simliar ideas as suffix automata.
We defer the construction in Appendix~\ref{app:sam_cyclic}.

\subsection{Detectability of watermarking schemes}

\begin{definition}[KL-PAC Learnability]
    \label{def:kl-pac}
    Given a class of stochastic languages or distributions $\mathcal{C}$ over $\Sigma^*$, an algorithm $\mathcal{A}$ \emph{KL-Probably Approximately Correctly (KL-PAC)-learns} $\mathcal{C}$ if there exists a polynomial $q$ such that for all $c \in \mathcal{C}$, all $\epsilon > 0$ and $\delta > 0$, $\mathcal{A}$ is given a sample $S_m$ and produces a hypothesis $H$ satisfying
    \begin{equation}
        \Prob \big[D_{\mathsf{KL}}(c || H) > \epsilon \big] < \delta
    \end{equation}
    whenever $m > q(1/\epsilon, 1/\delta, |c|)$, where $|c|$ is some measure of the complexity of the target. The algorithm runs in time polynomial in $m$ plus the total length of the strings in $S_m$.
\end{definition}

\begin{theorem}
    \label{thm:pdfa-detectable} For any $\mu>0$, any watermarking scheme the output of which under a uniform model represented by a $\mu$-distinguishable PDFA with polynomial many states is detectable.
\end{theorem}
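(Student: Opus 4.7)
The plan is to construct an efficient adversary $\calA$ that achieves non-negligible advantage in the $\indcpa$ game, thereby establishing detectability. The strategy combines the oracle access granted in $\indcpa$ with the KL-PAC learning algorithm for $\mu$-distinguishable PDFAs (e.g., Clark--Thollard), which runs in time polynomial in the state count, $1/\mu$, $1/\epsilon$, and the sample length.

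First, $\calA$ fixes $\mathsf{Model}$ to be the uniform categorical distribution over $\calV$, both for the oracle queries and later for the challenge. By hypothesis, the induced watermarked distribution $\calM_\sk$ over $\calV^m$ is recognized by a $\mu$-distinguishable PDFA with polynomially many states in $\lambda$. The adversary queries $\mathsf{Model}_{\sk}^{\mathsf{wat}}(\cdot)$ a polynomial number of times to obtain a training sample, then invokes the KL-PAC learner on this sample to obtain a hypothesis $H$ satisfying $D_{\mathsf{KL}}(\calM_\sk \,\|\, H) < \epsilon$ with probability $1-\delta$, where $\epsilon$ and $\delta$ are chosen inverse-polynomial in $\lambda$. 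Since the learning and the simulation of $H$ on a polynomial-length string both run in polynomial time, $\calA$ is an efficient adversary.

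In the challenge phase, $\calA$ receives a string $\by$ drawn from $U$ (uniform over $\calV^{|\by|}$) if $b=0$ or from $\calM_\sk$ if $b=1$, and outputs $b'=1$ iff $p_H(\by) > p_U(\by)$, approximating the optimal likelihood-ratio test between $\calM_\sk$ and $U$. A standard computation shows that this test has advantage at least $\tfrac{1}{2}\|\calM_\sk - U\|_{\mathrm{TV}} - \|\calM_\sk - H\|_{\mathrm{TV}}$; applying Pinsker's inequality to the KL guarantee gives $\|\calM_\sk - H\|_{\mathrm{TV}} \le \sqrt{\epsilon/2}$, so the advantage is lower bounded by $\tfrac{1}{2}\|\calM_\sk - U\|_{\mathrm{TV}} - \sqrt{\epsilon/2}$.

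The main obstacle is the final step: arguing that $\|\calM_\sk - U\|_{\mathrm{TV}}$ is itself non-negligible. This is where the non-triviality of the watermarking scheme is invoked. If $\|\calM_\sk - U\|_{\mathrm{TV}}$ were negligible, then $\mathsf{Detect}_\sk$ would behave almost identically on watermarked output and on uniform output, making it impossible to simultaneously achieve completeness on $\calM_\sk$ and soundness on $U$. Hence any meaningful watermarking scheme yields a non-negligible statistical gap under the uniform model, and choosing $\epsilon$ quadratically smaller than this gap ensures the overall advantage is non-negligible. This produces the required polynomial-time $\indcpa$ distinguisher and establishes detectability.
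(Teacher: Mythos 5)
Your proof follows the same high-level route as the paper's (Appendix~\ref{app:thm-pdfa-detectable}): both invoke the Clark--Thollard result that $\mu$-distinguishable PDFAs with polynomially many states are KL-PAC-learnable, use $\indcpa$ oracle access under a uniform $\mathsf{Model}$ to gather a polynomial training sample, and build the distinguisher from the learned hypothesis. Where you go further is in making the final two steps precise, which the paper leaves vague. You spell out the likelihood-ratio test and quantify its advantage as $\tfrac12\|\calM_\sk - U\|_{\mathrm{TV}} - \|\calM_\sk - H\|_{\mathrm{TV}} \ge \tfrac12\|\calM_\sk - U\|_{\mathrm{TV}} - \sqrt{\epsilon/2}$ via Pinsker's inequality, and you explicitly justify why $\|\calM_\sk - U\|_{\mathrm{TV}}$ is non-negligible by appealing to the completeness/soundness requirements on $\mathsf{Detect}_\sk$. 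The paper instead asserts that ``by definition of $\mu$-distinguishability'' the watermark must introduce a statistically significant difference; but $\mu$-distinguishability is an internal property of the PDFA's suffix distributions between states and does not by itself lower-bound the distance between the watermarked output distribution and the uniform model, so your completeness/soundness argument is the more defensible justification of this step. Both versions share the implicit premise that the scheme embeds nontrivial signal under a uniform input model, so you are not assuming more than the paper does, merely stating it openly.
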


Theorem~\ref{thm:pdfa-detectable} provides an alternative perspective to show that the watermarks proposed by \citet{kirchenbauer2023watermark,zhao2024provable,aaronson2022aisafety,kuditipudi2024robust} are all detectable and even spoofable.

\section{Proof of technical results}
\label{app:proof}

\subsection{Proof of Theorem~\ref{thm:perfect-distortion-free}}
\label{app:perfect-distortion-free}

\begin{proof}
We begin we noticing that

\begin{equation}
\text{Adv}_{\calW}^{\indot}(\calA) = \left| \Pr \left[ b=b' \right] - \frac{1}{2} \right| = 0
\quad \Leftrightarrow \quad \Pr \left[ b=b' \right] = \frac{1}{2}.
\end{equation}

\paragraph{Necessity.} 
Fix an arbitrary output $\by \in \calL$ for a given key $\sk \in K$. Suppose an adversary $\mathcal{A}$ attempts to distinguish between the original and watermarked models. The adversary may partition the output space $\calL$ into two disjoint sets:
\begin{equation}
\calL_{0}, \quad \calL_{1},
\end{equation}
where $\calL_{0} \cap \calL_{1} = \varnothing$ and $\calL_{0} \cup \calL_{1} = \calL$. The adversary's strategy is to output $b' = 0$ if $\by \in \calL_0$ and $b' = 1$ otherwise. The probability of correctly guessing $b$ is then given by:

\begin{equation}
\begin{aligned}
\Pr[b = b'] &= \Pr[b' = 0 \mid b = 0] \Pr[b = 0] + \Pr[b' = 1 \mid b = 1] \Pr[b = 1] \\
&= \frac{1}{2} \left( \Pr[\by \in \calL_0 \mid K = \sk, b = 0] + \Pr[\by \in \calL_1 \mid K = \sk, b = 1] \right).
\end{aligned}
\end{equation}

Since the watermarking scheme $\calW$ is assumed to be perfectly distortion-free, we have:

\begin{equation}
\Pr[Y = \by] = \E_{\sk \leftarrow \mathsf{Gen}(1^\lambda)} \Bigl[ \Pr[Y = \by \mid K = \sk] \Bigr].
\end{equation}

Taking expectations over the key $\sk$, it follows that:

\begin{equation}
\Pr[\by \in \calL_0] = \E_{\sk \leftarrow \mathsf{Gen}(1^\lambda)} \left[ \Pr[\by \in \calL_0 \mid K = \sk] \right],
\end{equation}
with a similar expression for $\calL_1$. By the law of total probability, we obtain:

\begin{equation}
\Pr[\by \in \calL_0] + \Pr[\by \in \calL_1] = 1.
\end{equation}

Substituting into the expression for $\Pr[b = b']$ and noting that $b$ is independent of $\sk$, we conclude:

\begin{equation}
\Pr[b = b'] = \frac{1}{2} \left( \Pr[\by \in \calL_0] + \Pr[\by \in \calL_1] \right) = \frac{1}{2}.
\end{equation}

Since the adversary cannot achieve a probability of success greater than $\frac{1}{2}$, we conclude:

\begin{equation}
\text{Adv}_{\calW}^{\indot}(\calA) = \left| \Pr[b = b'] - \frac{1}{2} \right| = 0.
\end{equation}

Thus, the adversary gains no distinguishing advantage, proving that the watermarking scheme $\calW$ is perfectly indistinguishable.

\paragraph{Sufficiency.}
Conversely, assume that the watermarking scheme $\calW$ is \emph{not} perfectly distortion-free. Then there exists a language model with output distribution $\calL$ and some $\by \in \calL$ such that:

\begin{equation}
\Pr[Y = \by] \;\neq\; \E_{\sk \leftarrow \mathsf{Gen}(1^\lambda)} \bigl[ \Pr[Y = \by \mid K = \sk] \bigr].
\end{equation}

Define the disjoint sets:

\begin{equation}
\calL_0 \;=\;\{\by \in \calL \mid \Pr[Y=\by] \;>\; \E_{\sk \leftarrow \mathsf{Gen}(1^\lambda)}\bigl[\Pr[Y=\by \mid K=\sk]\bigr]\},
\quad
\calL_1 \;=\;\calL \setminus \calL_0.
\end{equation}

Consider an adversary $\calA$ that outputs $b' = 0$ when $\by \in \calL_0$ and $b' = 1$ when $\by \in \calL_1$. The probability of a correct guess is given by:

\begin{equation}
\Pr[b = b'] = \frac{1}{2} \left( \Pr[\by \in \calL_0 \mid b=0] + \Pr[\by \in \calL_1 \mid b=1] \right).
\end{equation}

Since $\by \in \calL_0$ appears more frequently in the original model’s distribution than in the watermarked model’s, and vice versa for $\calL_1$, it follows that:

\begin{equation}
\Pr[b = b'] > \frac{1}{2}.
\end{equation}

Thus, the adversary gains a nonzero advantage, contradicting the assumption that $\calW$ is perfectly distortion-free, which completes the proof.
\end{proof}

\subsection{Proof of Theorem~\ref{thm:pdfa-detectable}}
\label{app:thm-pdfa-detectable}

\begin{proof}
    Since any $\mu$-distinguishable PDFA with a polynomial number of states and a bound on the expected length of strings generated from any state is KL-PAC-learnable \citep{clark2004pac}. That is, given sufficiently many samples from the distribution induced by a $\mu$-distinguishable PDFA, an efficient learning algorithm can approximate the distribution arbitrarily well.

    Suppose, for the sake of contradiction, that there exists a watermarking scheme $\mathcal{W}$ that produces outputs indistinguishable from those of an unwatermarked uniform model represented by a $\mu$-distinguishable PDFA $\mathcal{M}$. That is, for any adversary $\mathcal{A}$,
    \begin{equation}
        \text{Adv}_{\mathcal{W}}^{\text{ind}}(\mathcal{A}) = \left| \Pr[b = b'] - \frac{1}{2} \right| = 0.
    \end{equation}
    This implies that no efficient adversary can distinguish between the original model $\mathcal{M}$ and the watermarked model $\mathcal{M}'$ with non-trivial probability.

    However, since $\mathcal{M}$ is $\mu$-distinguishable, it is KL-PAC-learnable. Thus, there exists a polynomial-time algorithm $\mathcal{L}$ that, given polynomially many samples from the output distribution of either $\mathcal{M}$ or $\mathcal{M}'$, can learn a hypothesis $h$ that approximates the true distribution up to any desired accuracy $\epsilon$. 

    Consider the following adversary $\mathcal{A}$ that attempts to distinguish between $\mathcal{M}$ and $\mathcal{M}'$:
    \begin{enumerate}
        \item Draw $m = \text{poly}(1/\epsilon, 1/\delta)$ samples from the given black-box model.
        \item Run the KL-PAC-learning algorithm $\mathcal{L}$ to obtain a hypothesis $h$ that approximates the underlying distribution.
        \item Compute the likelihood of the observed samples under both the learned approximation of $\mathcal{M}$ and $\mathcal{M}'$.
        \item Output $b' = 0$ if the samples are closer to the expected distribution of $\mathcal{M}$ and $b' = 1$ otherwise.
    \end{enumerate}

    Since $\mathcal{M}$ is $\mu$-distinguishable, every pair of states in $\mathcal{M}$ induces suffix distributions that differ by at least $\mu$ in $\ell_{\infty}$ norm, and $\calM$ terminates with $m = \text{poly}(\lambda)$ steps. If the watermarking scheme $\mathcal{W}$ alters the output distribution, then by definition of $\mu$-distinguishability, it must introduce a statistically significant difference that is detectable given polynomially many samples.

    Therefore, the KL-PAC-learning algorithm $\mathcal{L}$ enables the adversary $\mathcal{A}$ to distinguish between the original and watermarked models with probability strictly greater than $\frac{1}{2} + \gamma$ for some $\gamma > 0$. This contradicts our initial assumption that the watermarking scheme is perfectly indistinguishable.

    Hence, we conclude that for any $\mu > 0$, any watermarking scheme applied to a uniform model represented by a $\mu$-distinguishable PDFA with a polynomial number of states is necessarily detectable.
\end{proof}

Since all watermarking schemes proposed in \citet{kirchenbauer2023watermark, zhao2024provable, aaronson2022aisafety, kuditipudi2024robust} can be represented as a PDFA, and each pair of states in these automata is distinguishable (except in the unlikely case where two randomly chosen partitions are identical or the corresponding uniform random variables are nearly indistinguishable) it follows that all such watermarking schemes are necessarily detectable. %

\section{Undetectable watermark construction}
\label{app:pnfa-undetectable}

Before presenting the watermark construction, we introduce the necessary definitions and assumptions.

\begin{definition}[PAC Learnability]
\label{def:pac-learnability}
Let $\mathcal{C}$ be a concept class consisting of Boolean functions $c: \mathcal{X} \to \{0,1\}$ over some instance space $\mathcal{X}$. We say that an algorithm $\mathcal{A}$ \emph{Probably Approximately Correctly (PAC)-learns} $\mathcal{C}$ if there exists a polynomial $q$ such that for all target concepts $c \in \mathcal{C}$, all distributions $\mathcal{D}$ over $\mathcal{X}$, and all $\epsilon, \delta > 0$, the algorithm $\mathcal{A}$, given access to i.i.d.\ examples drawn from $\mathcal{D}$ labeled by $c$, outputs a hypothesis $h$ such that
\begin{equation}
    \Pr\big[\,\Pr_{x \sim \mathcal{D}}\left[h(x) \neq c(x)\right] > \epsilon\,\big] < \delta
\end{equation}
whenever the number of examples $m > q(1/\epsilon, 1/\delta, |c|)$, where $|c|$ denotes a measure of the complexity of the target concept. The algorithm runs in time polynomial in $m$ plus the size of the input examples.
\end{definition}

\begin{assumption}[Sparsely Learning Parities with Noise]
\label{assump:sparse-parity}
Let $\mathcal{F} = (F_n)_{n \in \mathbb{N}}$ be the class of parity functions $F_n = \{ f_s(x) = \bigoplus_{i \in [n]} x_i s_i \mid s \in \{0,1\}^n,\ \|s\|_1 = \log n \}$, where $x \in \{0,1\}^n$. Each $f_s$ computes the parity of a logarithmic-sized subset of the input, defined by the support of $s$. No polynomial-time algorithm can PAC learn $\mathcal{F}$ under classification noise rate $q = 1/3$.
\end{assumption}

\begin{definition}[Entropy Bound of Language Model]
\label{assump:entropy-bound}
Let $(y_1, y_2, \dots, y_n)$ be a sequence of tokens generated by a language model over a finite vocabulary $\mathcal{V}$. The model assigns a conditional distribution $p(y_i \mid \by_{1:i-1})$ at each position $i$. We define the entropy bound of a language model as
\begin{equation}
    \underline{H}(\mathsf{Model}) = \inf \big\{\mathbb{E}_{i \sim \mathcal{U}[n]} \left[ H\left( y_i \mid \by_{1:i-1} \right) \right] \big\},
\end{equation}
where $H(y_i \mid \by_{1:i-1})$ denotes the conditional entropy and the expectation is taken over a uniformly random index $i \in [n]$.
\end{definition}

\begin{definition}[Weak Pseudorandom Function Family]
\label{def:weak-prf}
A family of functions $\mathcal{F} = \{f_s : \{0,1\}^n \to \{0,1\}\}_{s \in \{0,1\}^n}$ is a \emph{weak pseudorandom function (PRF) family} if for every probabilistic polynomial-time adversary $\mathcal{A}$, the distinguishing advantage
\begin{equation}
\left| \Pr_{s \sim \{0,1\}^n}[\mathcal{A}^{f_s}(1^n) = 1] - \Pr_{f \sim \mathcal{U}}[\mathcal{A}^{f}(1^n) = 1] \right| \le \text{negl}(n),
\end{equation}
where $\mathcal{U}$ denotes the uniform distribution over all functions $f : \{0,1\}^n \to \{0,1\}$, and the oracle queries to $\mathcal{A}$ are drawn uniformly at random from $\{0,1\}^n$.
\end{definition}

\begin{proposition}[From PAC Unlearnability to Weak Pseudorandomness]
\label{prop:weak-prf}
Let $\mathcal{F} = (\mathcal{F}_n)_{n \in \mathbb{N}}$ be a class of functions that is not PAC learnable under classification noise rate $q$. Then, for $n(\lambda) = \lambda$, the sequence of functions $(\mathcal{F}_{n(\lambda)})_\lambda$ constitutes a weak pseudorandom function family with noise level $q$.
\end{proposition}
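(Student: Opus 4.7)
The plan is to argue by contraposition: I would assume that $(\mathcal{F}_{n(\lambda)})_\lambda$ is \emph{not} a weak PRF with noise level $q$ and then derive a PPT PAC learner for $\mathcal{F}$ under classification noise rate $q$, contradicting the unlearnability hypothesis. Unpacking the negation, there exist a PPT distinguisher $\mathcal{A}$, a polynomial $m(\lambda)$ bounding its number of oracle queries, and a polynomial $p(\lambda)$ such that, for infinitely many $\lambda$,
\begin{equation*}
\left| \Pr_{s, x_i, b_i}\!\left[\mathcal{A}(1^\lambda, \{(x_i, f_s(x_i) \oplus b_i)\}_{i=1}^{m}) = 1\right] - \Pr_{x_i, u_i}\!\left[\mathcal{A}(1^\lambda, \{(x_i, u_i)\}_{i=1}^{m}) = 1\right] \right| \ge \tfrac{1}{p(\lambda)},
\end{equation*}
where the $x_i$ and $u_i$ are uniform on $\{0,1\}^\lambda$ and $\{0,1\}$ respectively and each $b_i \sim \mathrm{Bern}(q)$ is fresh.

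Next I would apply the standard Yao-style hybrid argument across the $m$ sample positions, interpolating between the fully-uniform and fully-noisy-concept distributions one coordinate at a time. Telescoping yields an index $i^{\star}$ at which $\mathcal{A}$'s acceptance probability differs by at least $1/(m(\lambda)\,p(\lambda))$ when the $i^{\star}$-th label is swapped from uniform to $f_s(x_{i^{\star}}) \oplus b_{i^{\star}}$. The usual distinguisher-to-predictor transformation then produces a PPT algorithm $\mathcal{P}$ that, given a target point $x^{\star}$ together with internally simulated auxiliary samples, outputs a bit matching $f_s(x^{\star}) \oplus b^{\star}$ with probability at least $1/2 + 1/\mathrm{poly}(\lambda)$. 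A short calculation that averages out the independent channel noise $b^{\star} \sim \mathrm{Bern}(q)$ converts this into a prediction of the clean label, $\Pr[\mathcal{P}(x^{\star}) = f_s(x^{\star})] \ge 1/2 + (1-2q)/\mathrm{poly}(\lambda)$, i.e., a weak hypothesis on the uniform distribution with non-negligible edge---computable from noisy oracle samples alone, since $\mathcal{P}$'s auxiliary samples are themselves drawn from that oracle.

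Finally, I would amplify this weak learner into a full $(\epsilon,\delta)$-accurate PAC learner via a noise-tolerant boosting procedure (e.g.\ the martingale booster of Kalai--Servedio or a statistical-query booster), each of which runs in time $\mathrm{poly}(\lambda, 1/\epsilon, 1/\delta)$ provided $q$ is bounded away from $1/2$---in the intended instantiation we have $q = 1/3$. This contradicts the assumed unlearnability of $\mathcal{F}$ at noise rate $q$, completing the contrapositive. The hard part will be this amplification step: off-the-shelf boosters such as AdaBoost amplify label noise together with the signal and are unusable here, so the reduction must be threaded through an explicitly noise-tolerant booster; one must also verify that the intermediate predictor $\mathcal{P}$ consumes only samples drawn from the noisy oracle, so that the resulting PAC learner never needs access to a clean label.
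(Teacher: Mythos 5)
The paper does not actually prove Proposition~\ref{prop:weak-prf}: it labels it ``well known'' and defers to \citet{bogdanov2017pseudorandom} and \citet{golowich2024edit}. So there is no in-paper argument to compare against, only the statement. That said, the reduction as cited in those references is shorter than yours because it identifies ``weak PRF'' with ``not \emph{weakly} learnable on the uniform distribution.'' The contrapositive of that version needs only the hybrid/next-bit argument: a distinguisher telescopes to a single crossing index, the crossing index yields a predictor on uniform inputs, and the noise averaging step (your third paragraph) converts the noisy-label predictor into a clean-label predictor, which \emph{is} a weak learner. No amplification is required. Your hybrid and noise-averaging steps are essentially the standard moves here, modulo a small slip: the clean-label advantage is $\alpha/(1-2q)$, not $(1-2q)\alpha$ (it increases, since the noise channel shrinks the signal and the predictor already saw through it), though this doesn't change anything asymptotically.

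Because you read ``PAC learnable'' in the strong sense, you are forced to bolt a boosting stage onto the end, and this is where a genuine gap appears that your proposal flags but does not resolve. The weak predictor $\mathcal{P}$ you extract from the distinguisher works only against the one distribution that appears in the weak-PRF game, namely the \emph{uniform} distribution on $\{0,1\}^\lambda$. A noise-tolerant booster (Kalai--Servedio, or the martingale/decision-tree boosters) needs the weak learner to succeed not just on uniform inputs but on the re-weighted (smooth) distributions it adaptively constructs; you have no such guarantee, and the hybrid argument cannot be rerun on a non-uniform input law because the weak-PRF oracle forces the $x_i$'s to be uniform. For the class actually used in the paper (sparse parities) this hole can be closed class-specifically --- a uniform-distribution weak predictor for a parity lets you recover the secret $\bs$ coordinate-by-coordinate, which trivially gives a strong, distribution-free learner --- but that is a Goldreich--Levin-style observation, not a generic booster invocation. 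If you keep the strong-PAC reading, you must either supply the booster with a distribution-free weak learner (which the distinguisher does not give you) or use a class-specific extraction argument; if you instead adopt the weak-learnability reading that the cited sources use, the boosting stage disappears and the proof closes cleanly after your second step.
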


Proposition~\ref{prop:weak-prf} is well known (see, for example, \cite{bogdanov2017pseudorandom}), and we present it here in the same formulation as in \cite{golowich2024edit}.

\subsection{Full Statement and Proof of Theorem~\ref{thm:pnfa-undetectable}}
\paragraph{Reduction to a binary vocabulary.}
For analytical convenience, we reduce the vocabulary of $\mathcal{V}$ to a binary vocabulary $\{0,1\}$. Any categorical distribution over $\mathcal{V}$ with $|\calV|$ can be equivalently represented by a binary distribution over $\{0,1\}^{\lceil \log_2 |\calV| \rceil}$ via entropy-preserving encoding. That is, given logits $p \in \Delta^{|\calV|-1}$, we define a mapping to binary sequences where each symbol $v \in \mathcal{V}$ is assigned a unique bitstring $b(v) \in \{0,1\}^{\lceil \log_2 |\calV| \rceil}$, and generation proceeds bit by bit using the induced marginal distributions. This reduction preserves perplexity and generation quality up to negligible statistical error. We therefore assume, without loss of generality, that the model outputs binary tokens.

We present the full statement of Theorem~\ref{thm:pnfa-undetectable} as follows.
\begin{theorem}
    There exists an undetectable watermarking scheme with generation length $n = \Omega(\lambda^{3+c}\log\lambda)$ for some $c>0$, entropy bounded by $\underline{H}(\mathsf{Model}) > 2 + \log_2(1-q) - \frac{3-4q}{4(1-q)}\log_2(3-4q)$ that can be represented by a PA, if sparse LPN is hard with noise level of $0 < q < \frac12$.
\end{theorem}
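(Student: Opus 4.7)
I will build the scheme in three layers and then analyze its security. First, I invoke Proposition~\ref{prop:weak-prf} on Assumption~\ref{assump:sparse-parity}: sparse LPN with noise rate $q$ lifts to a weak pseudorandom function family $\{f_\sk\}$ whose members are sparse parities over $\{0,1\}^\lambda$, each identified by a secret support $S\subset [\lambda]$ of size $|S|=\log\lambda$. The key generation algorithm $\mathsf{Gen}(1^\lambda)$ samples $S$ uniformly. Second, by the binary-vocabulary reduction stated right before the theorem, I may assume $\calV=\{0,1\}$ without loss of generality. The watermarking algorithm $\mathsf{Model}^{\mathsf{wat}}_\sk$ then proceeds as follows: at every decoding step $i$ the automaton freshly samples a public input $r_i\in\{0,1\}^\lambda$, evaluates the noisy parity bit $\xi_i=f_\sk(r_i)$, and feeds $\xi_i$ into the exponential-minimum decoder of Section~\ref{sec:constr-lm} via~\eqref{eq:gumbel-decoder}.

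\paragraph{PNFA representation.} A sparse parity is realized by a PNFA with polynomially many states by first guessing the support $S$ nondeterministically in the initial transitions, then scanning $r_i$ while toggling a single parity bit whenever the current index lies in the guessed $S$, and finally flipping the output with probability $q$ before emission. The hierarchical subordinate-PA construction of Section~\ref{subsec:pa} lifts these Bernoulli outputs to the binary noise expansion~\eqref{eq:uni-to-bin} required by the decoder, so the entire scheme is a PA of size $\mathrm{poly}(\lambda)$. Crucially, the support $S$ is hidden behind the nondeterministic branch, so the induced stochastic language is \emph{not} realizable by a PDFA with polynomial state count; this is exactly the expressiveness gap identified in Section~\ref{sec:distortion} and prevents the KL-PAC detectability route of Theorem~\ref{thm:pdfa-detectable}.

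\paragraph{Undetectability reduction.} I prove $\indcpa$-undetectability by contraposition. Given any polynomial-time adversary $\calA$ with advantage $\epsilon(\lambda)$ against $\calW$, I build a weak-PRF distinguisher $\calD$ that receives oracle access to either $f_\sk$ (PRF world) or a truly random function (random world), subject to oracle queries being drawn uniformly. $\calD$ simulates both the oracle $\mathsf{Model}'^{\mathsf{wat}}_\sk$ and the challenge sequence demanded by the $\indcpa$ game for $\calA$: whenever the simulation needs a noise bit at some step, $\calD$ draws $r_i$ uniformly from $\{0,1\}^\lambda$, calls its oracle on $r_i$, and plugs the answer into the decoder. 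In the PRF world the joint distribution matches $\calW$ exactly; in the random world the noises $\xi_i$ are statistically independent of $\sk$, so by a standard hybrid argument the produced sequences are statistically indistinguishable from the unwatermarked side of the game. Outputting $\calA$'s guess gives $\calD$ advantage at least $\epsilon(\lambda)-\mathrm{negl}(\lambda)$, contradicting Proposition~\ref{prop:weak-prf}.

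\paragraph{Entropy bound and the main obstacle.} The technical crux is the entropy threshold. Because $f_\sk$ has intrinsic noise $q$, each watermark bit is correct only with probability $1-q$, so the decoder spends additional model entropy to mask the bias; I plan to quantify the per-step loss by viewing decoding as passing the token through a binary-symmetric channel with crossover $q$ and computing the resulting reduction in conditional entropy. The stated bound $2+\log_2(1-q)-\tfrac{3-4q}{4(1-q)}\log_2(3-4q)$ should arise as the point at which this channel still leaves $\omega(\log\lambda)$ bits of genuine randomness per token, keeping the mutual information between the emitted sequence and $\sk$ at $o(1/\mathrm{poly}(\lambda))$ per step. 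Summing over $n=\Omega(\lambda^{3+c}\log\lambda)$ tokens must then keep the cumulative statistical distance negligible while providing enough samples for the underlying sparse-LPN problem to remain hard at security parameter $\lambda$. Making these two polynomial budgets close in the same proof, and showing the constant in the entropy bound is sharp rather than merely sufficient, is the step I expect to be the main obstacle; everything else reduces to standard cryptographic bookkeeping.
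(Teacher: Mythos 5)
There is a genuine gap, in fact several. The most fundamental is that your scheme has no detection algorithm and, as constructed, cannot have one. You feed the noisy parity $\xi_i = f_{\sk}(r_i)$ into the decoder where each $r_i\in\{0,1\}^\lambda$ is ``freshly sampled'' at generation time. But then $r_i$ is neither derivable from the observed text nor stored anywhere, so $\mathsf{Detect}_{\sk}$ has no way to recompute $f_\sk(r_i)$ and check whether the emitted token agrees with it. A watermarking scheme is the triple $(\mathsf{Gen},\mathsf{Model}^{\mathsf{wat}},\mathsf{Detect})$; your proposal only gives the first two, and an undetectability reduction that never touches detection cannot establish the theorem. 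The paper avoids this by making the parity input be the \emph{watermark buffer of previously emitted bits}, i.e.\ something reconstructible from $\by$: at positions $i$ with $(\lambda{+}1)\mid i$ it extracts $x_i=\mathbf 1[\mu_0<\mu_1]$ from the Gumbel pair, biases the ordering of $(\mu_0,\mu_1)$ so that $x_i=f_\sk(\bx)$ with probability $1-q$, and the detector recomputes $\hat p = \tfrac1t\sum z_i$ over $t=\Omega(n/\lambda)$ such positions and thresholds at $\tfrac12+\theta$ with $\theta=\Omega(\lambda^{-1})$. Completeness then follows from a Chernoff bound and soundness from Hoeffding; none of this appears in your write-up.

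The second gap is the entropy bound, which you explicitly defer (``I plan to quantify \dots is the step I expect to be the main obstacle''). The constant $2 + \log_2(1-q) - \tfrac{3-4q}{4(1-q)}\log_2(3-4q)$ has to come out of the completeness calculation: one needs to lower-bound $\Pr[x_i=y_i]$ in terms of the minimal per-token bias $p_0$ implied by $\underline H(\mathsf{Model})$ and the noise $q$, show the resulting advantage is $c\lambda^{-1}$ with $c=2p_0-2qp_0-\tfrac12 > 0$, and only then does the Chernoff tail give a negligible failure probability over $t=\Omega(\lambda^{2+c}\log\lambda)$ comparisons (hence $n=\Omega(\lambda^{3+c}\log\lambda)$). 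Viewing this as ``mutual information per step through a BSC'' is not what the paper does and, without the concrete per-step advantage calculation, you cannot recover the stated constant or the choice of $n$. Finally, your undetectability argument via a weak-PRF distinguisher is in the right spirit (the paper also invokes Proposition~\ref{prop:weak-prf}), but the claim that the random-oracle world is ``statistically indistinguishable from the unwatermarked side of the game'' by ``a standard hybrid argument'' is doing real work you have not done: in the random world the Gumbel pair is biased by a uniformly random bit rather than left untouched, so you must argue this induces exactly the unwatermarked exponential-minimum distribution, which is a distributional identity, not a generic hybrid step.
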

\begin{proof}
Let $f_{\bs}(\bx) = \bs \cdot \bx \bmod 2$ be a parity function with noise level of $q$ on the support of a secret key $\bs \in \{0,1\}^{\lambda}$, and suppose a language model with binary output alphabet $\{0,1\}$.

At each token position $i$, we generate a pair of Gumbel noise variables $(\mu_0, \mu_1)$ sampled from $U[0,1]^2$, whose binary representations are given by~\eqref{eq:uni-to-bin}.

If $(\lambda + 1) \nmid i$, the pair $(\mu_0, \mu_1)$ is left unmodified. Otherwise, when $(\lambda + 1) \mid i$, we extract a watermark bit $x_i$ via the indicator
\begin{equation}
    x_i = \mathbf{1}\left[\mu_0 < \mu_1\right],
\end{equation}
compute the parity bit $b = f_{\bs}(\bx)$ over the current watermark buffer $\bx$, and enforce $x_i = b$ by outputting $(\mu_0, \mu_1)$ with probability $1-q$ if $b=0$ or with probability $q$ and $b=1$, otherwise outputting $(\mu_1, \mu_0)$. This is easily accomplished by a PA by counting the number of bits of $\bs$. The decoder of~\eqref{eq:gumbel-decoder} is used for each step to generating $y_i$.

The detector identifies all positions $i$ such that $(\lambda + 1) \mid i$ and reconstructs the corresponding Gumbel pairs $(\mu_0, \mu_1)$ used during generation. It computes the bit
\begin{equation}
x_i = \mathbf{1}[\mu_0 < \mu_1],
\end{equation}
and compares it against the output token $y_i \in \{0,1\}$. For each such position, let $z_i = \mathbf{1}[x_i = y_i]$ denote a match indicator. Over $t = \Omega(n/\lambda)$ such comparisons, the detector computes the empirical match rate:
\begin{equation}
\hat{p} = \frac{1}{t} \sum_{i=1}^{t} z_i.
\end{equation}

The detector accepts (i.e., concludes the presence of a watermark) if \( \hat{p} \ge \frac{1}{2} + \theta \) for some threshold \( \theta = \Omega(\lambda^{-1}) \), and rejects otherwise.

\textbf{Undetectability.} Let $D_S$ denote the distribution over input-output pairs induced by our construction, where the input $\bx \in \{0,1\}^{\lambda}$ is uniformly random and the output bit is given by $f_{\bs}(\bx)$ with noise rate $q$. This corresponds to the standard \emph{noisy parity} distribution: with probability $1 - q$, the output is $f_{\bs}(\bx)$, and with probability $q$, it is flipped.

Suppose there exists an algorithm $\mathcal{A}$ that KL-PAC-learns the class of such distributions, i.e., for any $\epsilon > 0$ and $\delta > 0$, given $m$ samples drawn from $D_S$, the algorithm produces a hypothesis distribution $\hat{D}$ such that
\begin{equation}
    \Pr \big[ D_{\mathsf{KL}}(D_S \,\|\, \hat{D}) > \epsilon \big] < \delta
\end{equation}
whenever $m > q(1/\epsilon, 1/\delta, |D_S|)$ for some polynomial $q$.

Now consider a random input $\bx \in \{0,1\}^{\lambda}$. Since the output bit is either $f_{\bs}(\bx)$ or its complement, the two candidate outputs are $\bx 0$ and $\bx 1$. A hypothesis $\hat{D}$ that approximates $D_S$ in KL-divergence necessarily assigns higher likelihood to the correct label in expectation. Therefore, by comparing the probabilities $\hat{D}(\bx 0)$ and $\hat{D}(\bx 1)$, one can predict $f_{\bs}(\bx)$ with probability at least $1 - \epsilon$.

This yields an efficient algorithm that predicts noisy parity with non-negligible advantage, contradicting the Noisy Parity assumption, and thus also the hardness of sparse LPN.

Finally, by Proposition~\ref{prop:weak-prf}, the parity function $f_{\bs}$ can be computed by a probabilistic automaton, and the corresponding output distribution $D_S$ serves as a pseudorandom function. In particular, since $D_S$ is not KL-PAC-learnable under the hardness of sparse LPN, it is indistinguishable from a uniform distribution by any efficient adversary.

Therefore, no efficient detector can distinguish watermarked text from unwatermarked text with non-negligible advantage.

\textbf{Completeness.} 
Let $p_0 \in (0,1/2]$ denote the smallest possible bias such that the binary entropy of $y_i$ satisfies
\begin{equation}
H(y_i) = -p_0 \log_2 p_0 - (1 - p_0) \log_2 (1 - p_0).
\end{equation}

Consider the probability that the sampled watermark bit $x_i$ matches the LM output $y_i$ when $i \mid (\lambda+1)$. Since $x_i = b$ with probability $1 - q$, and is flipped with probability $q$ (due to the noise in $f_{\bs}$), we have the following bound:
\begin{equation}
\begin{aligned}
\Pr \left[x_i = y_i\right]
&\ge \left[2p_0\left(\frac{1}{2}\right)^{\log\lambda}(1-q) + \frac{1}{2}\left(1 - \left(\frac{1}{2}\right)^{\log \lambda}\right)\right] \\
&= \frac{1}{2} + \lambda^{-1} \cdot \left(2p_0 - 2qp_0 - \frac12\right) \\
&= \frac{1}{2} + c\lambda^{-1},
\end{aligned}
\end{equation}
where $c := 2p_0-2qp_0-\frac12 > 0$ by the entropy bound.

Then we have the following Chernoff bound:
\begin{equation}
\Pr\left[\hat{p} < \frac{1}{2} + \theta\right] \le \exp\left(-2t(c\lambda^{-1} - \theta)^2\right) = \exp\left(-\frac{t c^2}{2\lambda^2}\right) = \text{negl}(\lambda).
\end{equation}

\textbf{Soundness.} For unwatermarked text, assuming the Gumbel samples and tokens are independent (i.e., no watermarking mechanism was used), the match indicators $z_i$ are i.i.d. unbiased coin flips:
\begin{equation}
\Pr[x_i = y_i] = \frac{1}{2}.
\end{equation}
Then by Hoeffding’s inequality,

\begin{equation}
\begin{aligned}
\Pr\!\left[\hat{p} \ge \tfrac{1}{2} + \theta \right]
&\le \exp(-2t\theta^2) \\
&\le \exp\!\left(-\lambda^{c}\log\lambda\right) \\
&= \operatorname{negl}(\lambda),
\end{aligned}
\end{equation}
which indicates that the detector rejects unwatermarked text with overwhelming probability.

\end{proof}

\section{Illustrations of constructions}
\subsection{Illustration of a subordinate probabilistic automaton}
\label{app:subpa}
To complement the formal description in Section~\ref{sec:constr-lm}, we provide an illustration of a subordinate probabilistic automaton (sub-PA) used in our watermarking framework. The sub-PA is responsible for generating the binary noise vectors $\mu_i \in \{0,1\}^c$ across vocabulary positions, as defined in the main text.

\begin{figure}[ht]
    \centering
    \includegraphics[width=0.6\linewidth]{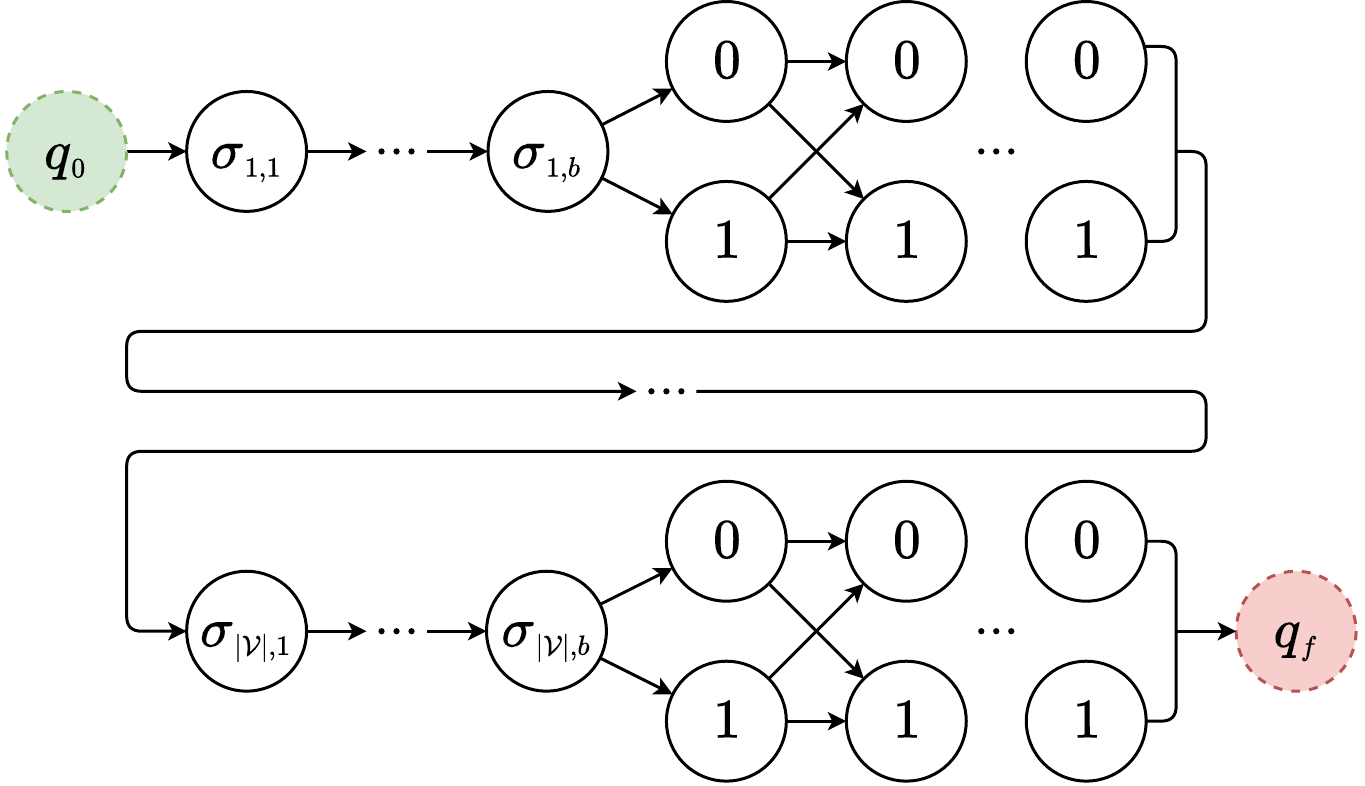}
    \caption{Illustration of a subordinate probabilistic automaton (sub-PA). 
    Each layer encodes a binary vector $\mu_i$, beginning from the initial state $q_0$ and terminating at $q_f$. 
    Transitions within each layer encode bitwise states, branching into parallel Boolean paths $\iota_{i,j}$ and $\hat{\iota}_{i,j}$, which represent $0$ and $1$ respectively. 
    Inter-layer transitions connect the terminal states of layer $i$ to the initial states of layer $i{+}1$.}
    \label{fig:subpa}
\end{figure}

As shown in Figure~\ref{fig:subpa}, the sub-PA progresses from the initial state $q_0$ through $|\mathcal{V}|$ layers, each corresponding to a token in the vocabulary. The Boolean branching ensures exponential variability in possible noise sequences, while the overall number of automaton states remains polynomially bounded. This property is essential for both efficient sampling and tractable computation of edit distance during watermark detection.

\subsection{Automaton construction for a cyclic string}
\label{app:sam_cyclic}
We provide the construction of an automaton for a cyclic string in Algorithm~\ref{alg:sam_cyclic}. Given the input string $s$, the automaton has at most $4|s|$ states.  Figure~\ref{fig:suffix-automata} illustrates an example PDFA that recognizes a substring of the repeated string of ``abaa''.

\begin{algorithm}[t]
\caption{Automaton Construction for a Cyclic String}
\label{alg:sam_cyclic}
\begin{algorithmic}[1]
\State \textbf{Initialize:} Let \( S \) be a set of states representing the suffix automaton, where each state \( S_i \) has:
\begin{itemize}
    \item \( S_i.\texttt{length} \): The length of the longest string ending at \( S_i \).
    \item \( S_i.\texttt{link} \): The suffix link, pointing to the longest proper suffix of the corresponding string that is also in the automaton.
    \item \( S_i.\texttt{next} \): A transition function mapping each character \( c \) to the corresponding state.
\end{itemize}
\State Create the initial state \( S_0 \) with \( S_0.\texttt{length} = 0 \) and \( S_0.\texttt{link} = -1 \).
\State \( \texttt{last} \gets 0 \) \Comment{Tracks the last added state}

\Function{Extend}{$c$}
    \State \( p \gets \texttt{last} \)
    \State Create a new state \( S_{\texttt{cur}} \) with \( S_{\texttt{cur}}.\texttt{length} = S_p.\texttt{length} + 1 \)
    \While{\( p \neq -1 \) and \( c \notin S_p.\texttt{next} \)}
        \State \( S_p.\texttt{next}[c] \gets \texttt{cur} \)
        \State \( p \gets S_p.\texttt{link} \)
    \EndWhile
    \If{\( p = -1 \)}
        \State \( S_{\texttt{cur}}.\texttt{link} \gets 0 \)
    \Else
        \State \( q \gets S_p.\texttt{next}[c] \)
        \If{\( S_p.\texttt{length} + 1 = S_q.\texttt{length} \)}
            \State \( S_{\texttt{cur}}.\texttt{link} \gets q \)
        \Else
            \State Create a new state \( S' \) with \( S'.\texttt{length} = S_p.\texttt{length} + 1 \)
            \State \( S'.\texttt{next} \gets S_q.\texttt{next} \)
            \State \( S'.\texttt{link} \gets S_q.\texttt{link} \)
            \While{\( p \neq -1 \) and \( S_p.\texttt{next}[c] = q \)}
                \State \( S_p.\texttt{next}[c] \gets S' \)
                \State \( p \gets S_p.\texttt{link} \)
            \EndWhile
            \State \( S_q.\texttt{link} \gets S' \)
            \State \( S_{\texttt{cur}}.\texttt{link} \gets S' \)
        \EndIf
    \EndIf
    \State \( \texttt{last} \gets \texttt{cur} \)
\EndFunction

\Function{BuildAutomaton}{$s$}
    \For{each character \( c \) in \( s \)}
        \State \Call{Extend}{$c$}
    \EndFor
    \State \( \texttt{idx} \gets \) index of the newest state in the automaton
    \For{each character \( c \) in \( s \)}
        \State \Call{Extend}{$c$}
    \EndFor
    \State \( S_{\texttt{last}}.\texttt{next}[s_0] \gets \texttt{idx} \)
\EndFunction
\end{algorithmic}
\end{algorithm}

\begin{figure}[ht]
\centering
  \includegraphics[width=0.45\linewidth]{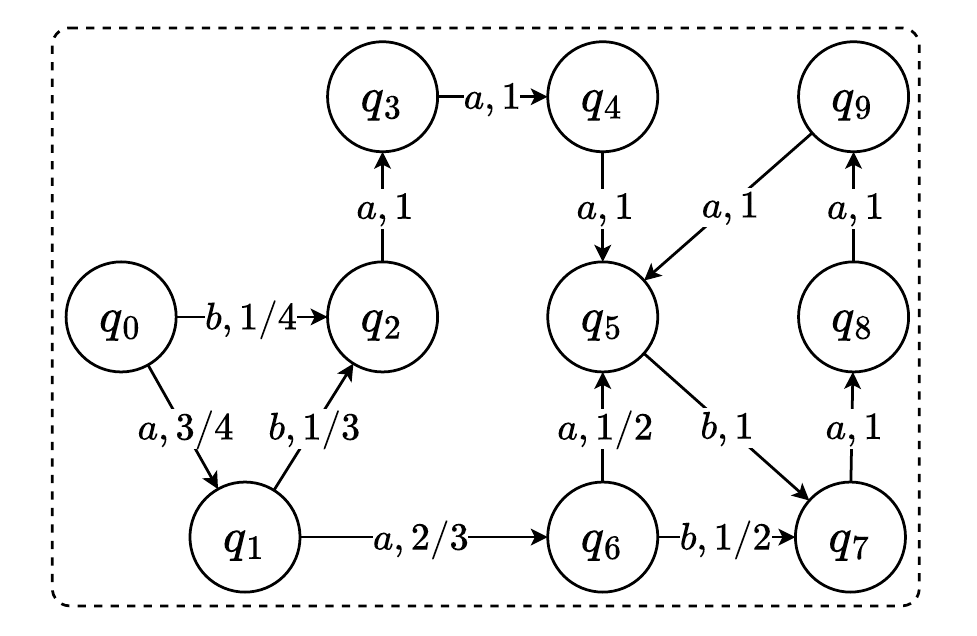}
\caption{An example of a PDFA with probability that recognizes a substring of repeated string ``abaa''.}
\label{fig:suffix-automata}
\end{figure}

\label{appendix:subpa}

\section*{Impact statement}
\label{app:impact}
This study showcases recent advances in the field of Machine Learning, aimed at preventing the misuse of AI-generated content. Our technique endeavors to improve the security of text generation systems, thereby reducing the risk of misleading information and contributing to the establishment of ownership and copyright of AI-generated content. From an ethical perspective, this technology plays an important role in protecting the integrity of information dissemination on digital platforms and helps strengthen public trust in AI applications. While we recognize that these technologies may be used to restrict freedom of information or enforce censorship in some environments, we believe that their benefits in reducing misinformation outweigh the potential risks.

\end{document}